\newtheorem{proposition}{Proposition}
\newtheorem{corollary}{Corollary}
\newtheorem{conjecture}{Conjecture}
\newtheorem{lemma}{Lemma}[]
\newtheorem{definition}{Definition}
\date{}
\begin{document}

\title{\bf Algorithms for the uniqueness of the longest common subsequence}
\author[1,2,*]{Yue Wang}
\affil[1]{Department of Computational Medicine, University of California, Los Angeles, California, United States of America}
\affil[2]{Irving Institute for Cancer Dynamics and Department of Statistics, Columbia University, New York, New York, United States of America}
\affil[*]{E-mail address: yw4241@columbia.edu. ORCID: 0000-0001-5918-7525}

\maketitle{}

\begin{abstract}
Given several number sequences, determining the longest common subsequence is a classical problem in computer science. This problem has applications in bioinformatics, especially determining transposable genes. Nevertheless, related works only consider how to find one longest common subsequence. In this paper, we consider how to determine the uniqueness of the longest common subsequence. If there are multiple longest common subsequences, we also determine which number appears in all/some/none of the longest common subsequences. We focus on four scenarios: (1) linear sequences without duplicated numbers; (2) circular sequences without duplicated numbers; (3) linear sequences with duplicated numbers; (4) circular sequences with duplicated numbers. We develop corresponding algorithms and apply them to gene sequencing data.

\begin{flushleft}
{\bf KEY WORDS:} longest common subsequence, algorithm, graph, transposable gene
\end{flushleft}
\end{abstract}

\section{Introduction}
Given some number sequences, a common subsequence is a number sequence which appears in all these sequences (not necessarily consecutive). Determining the longest common subsequence (LCS) for some number sequences is a classical problem in computer science. LCS is a common tool to evaluate the difference among different sequences. For example, LCS can be applied to computational linguistics \cite{lin2004automatic,sorokin2016using,silfverberg2018computational}. In biology, it is common to use the length of LCS as a quantitative score for comparing DNA sequences \cite{chen2004space,imbeault2017krab,zimin2017hybrid}. LCS has also been used to define ultraconserved elements \cite{reneker2012long} or remove incongruent markers in DNA sequences \cite{diop2020pseudomolecule}.

Various scenarios for the LCS problem have been studied. Here we list Scenarios A-E, where the first two are more commonly studied. For more works in these scenarios, readers may refer to more thorough reviews \cite{bergroth2000survey,huang2004fast,wei2020path}. 

Scenario A considers two sequences with possibly repeated numbers, and the sequence length is $n$. The goal is to find the LCS. If a number appears multiple time in a common subsequence, all appearances are counted when calculating the length of this common subsequence. This can be solved by dynamic programming with $\mathcal{O}(n^2)$ time complexity and $\mathcal{O}(n)$ space complexity \cite{hirschberg1975linear}, or more wisely with $\mathcal{O}(n^2/\log n)$ time complexity \cite{masek1980faster}, but $\mathcal{O}(n^{2-\epsilon})$ time complexity for any $\epsilon>0$ is impossible \cite{backurs2015edit}. This also can be solved with $o(n)$ space complexity and $\mathcal{O}(n^3)$ time complexity \cite{kiyomi2021longest}. 

In Scenario B, there are $m$ sequences with possibly repeated numbers, and the sequence length is $n$. The goal is to find the LCS. If a number appears multiple time in a common subsequence, all appearances are counted when calculating the length of this common subsequence. A standard dynamic programming algorithm has $\mathcal{O}(n^m)$ time complexity \cite{blum2021solving}. There have been other faster algorithms \cite{wang2010fast,mousavi2012improved,islam2019chemical}. When $m$ is not fixed, this scenario is equivalent to the maximum clique problem in graph theory, which is NP-hard \cite{maier1978complexity}, but has relatively fast exact and heuristic algorithms \cite{jiang2016combining,li2017minimization,wang2016two}. 

Scenario C considers $2$ sequences with possibly repeated numbers, and the sequence length is $n$. The goal is to find the LCS, where each number appears at most once. This scenario is NP-hard \cite{adi2010repetition}. 

Scenario D is similar to Scenario B, but only consider common subsequences that contain or do not contain certain strings \cite{wang2019efficient,ngomade2020dominant}. 

In Scenario E, the sequences are arc-annotated, and LCS should have the same arc annotation in original sequences \cite{jiang2000longest}.

In this paper, the motivation of studying the LCS problem is to apply it to compare gene sequences. Assume we have some gene sequences from different individuals of the same species or different species. Some genes are relatively unstable, and they can change their relative locations in the gene sequence (transposable). An unstable gene might also be duplicated or deleted. Therefore, these gene sequences from different individuals are not identical. Then we can find the LCS, which is useful for measuring the stability of genes. Genes in the LCS should be more stable, and genes not in the LCS should be transposable. 

Due to the motivation of comparing gene sequences, we consider four scenarios that are different from the previously studied LCS problems. These four scenarios are determined by two factors: whether the considered species has linear or circular gene sequences, and whether genes have multiple copies. When genes have multiple copies, we only consider common subsequences that consist of all or none of copies of the same gene. Scenario 1 has linear sequences without duplicated genes; Scenario 2 has circular sequences without duplicated genes; Scenario 3 has linear sequences with duplicated genes; Scenario 4 has circular sequences with duplicated genes. 

{Since LCS is commonly used as a tool for quantifying the difference among sequences, in such situations, we just need the length of the LCS. Although researchers have noticed that the LCS might not be unique \cite{apostolico1987longest,lember2014optimal}, the length of different LCS is the same. In some situations, we concern not just the length of the LCS, but also each element in the LCS. Most known methods only aim at finding one LCS. Therefore, when the LCS is not unique, these methods might produce different LCSs, which might cause confusion. For signal transmission in noise environment, the same signal can be transmitted several times, and one receives different versions (with different noises) of this signal \cite{ullsperger2006does}. We believe that words in the LCS of the received signals should be more reliable. When the LCS is not unique, LCS can be used to quantify the reliability of each word: words appearing in all LCSs should be most reliable; words appearing in some LCSs should be somewhat reliable, depending on the number of LCSs it appears in; words appearing in none of the LCSs should be least reliable. When we concern the stability of genes, and the LCS of gene sequences is not unique, the relation of gene and LCS is also important. A gene that appears in all the LCSs is highly stable; a gene that appears in some LCSs is moderately stable; a gene that appears in no LCS is unstable. Although we can find all the LCSs \cite{bergroth2000survey}, in some situations, determining all LCSs is too time-consuming, since there might be exponentially many LCSs.} 
For example, consider two sequences $(1,2,3,4,5,6,\ldots,2n-1,2n)$ and $(2,1,4,3,6,5,\ldots,2n,2n-1)$. Although the sequence length is $2n$, and the LCS length is $n$, the number of LCSs is $2^n$. To determine the relationship between genes and LCSs, we develop corresponding algorithms with polynomial time complexities for Scenarios 1, 2 (Algorithms~\ref{alg2}, \ref{alg4}). To our knowledge, there are no other determinations of whether genes appear in all LCSs with polynomial complexities. 

{For most prokaryotes, the DNA is circular, and we need to treat it as a circular sequence that can rotate, not a linear sequence.} Although circular sequences are commonly studied in the context of genomic rearrangements, they are rare in the literature of the LCS problems. {We only know some work that find the LCS for $m=2$ circular sequences \cite{maes1990cyclic}. Therefore, our Algorithm~\ref{alg3} that finds one LCS for Scenario 2 with $m\ge 3$ should also be novel. Besides nucleic acid sequences, some amino acid sequences (proteins) are circular \cite{craik2012thematic,grossi2016circular}. Also, the 1-D boundary of an object in a 2-D image is generally circular, making the comparison of circular sequences important in shape recognition \cite{bunke1993applications,lin2015circular}. Therefore, our Algorithm~\ref{alg3} has potential applications in protein comparison and pattern recognition.}

{In general, the same gene might have multiple copies in a gene sequences (Scenarios 3, 4). We want to find which gene (a specific DNA sequence) has the ability to change its position, not a certain copy of a gene that changes its position. In other words, ``transposable'' is defined for genes, not gene copies \cite{bourque2018ten}. Therefore, we should only consider common subsequences that consist of all or none copies of the same gene. When calculating the length of a common subsequence, we should count genes, not gene copies. In other words, we want to exclude the minimal number of genes, and copies of the remaining genes form the same subsequence. We develop the equivalence of Scenario 3 with the maximum clique problems on graphs (Proposition~\ref{p1}). We prove that Scenario 4 is at least as hard as the maximum clique problems on graphs, and it can be reduced to the maximum clique problems on $3$-uniform hypergraphs under a weak condition (Propositions~\ref{p2},~\ref{p3}). Even for two sequences, Scenarios 3, 4 are NP-hard, while Scenario B has polynomial algorithms. Thus they are essentially different from the classic Scenario B, which considers any subsequence. Our problem of determining the LCS in subsequences with all or none copies of the same gene has other applications, such as in linguistics. In some languages, certain words can change their relative positions in sentences (anastrophe) \cite{shahin2015anastrophe,cuzzolin2021note}. For multiple sentences expressing the same meaning, we can find the LCS, and words not in the LCS should be anastrophic. However, being anastrophic is a property of certain words, not of certain copies of certain words. Therefore, we should use the minimal number of words (not copies of words) to explain different orders of words. This means the problem of Scenario 3. Otherwise, if we consider all subsequences and count the length by copies, as in the classic Scenario B, then it is possible that one very common word that can translocate is contained in the LCS, since it appears too many times.}

If we only need to find one LCS, then Scenario 1 is a special case of Scenario B, and our method (Algorithm~\ref{alg1}) can be easily derived from standard algorithms. Scenarios 3, 4 can be reduced to maximum clique problems in graphs and hypergraphs, which are NP-hard. Although there have been numerous algorithms for the maximum clique problem \cite{wu2015review}, for the sake of completeness, we design fast heuristic algorithms (Algorithms~\ref{alg5}, \ref{alg7}) and test them to find that they only fail in rare cases. 

We proposed the idea of using the LCS to find transposable genes and Algorithm~\ref{alg1} in a previous paper \cite{kang2014flexibility}, where Algorithm~\ref{alg1} was applied to study the ``core-gene-defined genome organizational framework'' (the complement of transposable genes) in various bacteria, and it was found that for different species, the transposable gene distribution and developmental traits are correlated. This paper considers other situations (especially when the LCS is not unique), and can be regarded as a theoretical sequel of that previous paper. Algorithm~\ref{alg1} is contained in this paper for the sake of completeness.

In sum, our main contributions are Algorithms~\ref{alg2}, \ref{alg3} (for $m\ge 3$), \ref{alg4} in Scenarios 1, 2 and Propositions~\ref{p1},~\ref{p2},~\ref{p3} and Corollary~\ref{coro} in Scenarios 3, 4. We test Algorithms~\ref{alg1}--\ref{alg7} on gene sequences of different \emph{Escherichia coli} individuals and find some possible transposable genes.

We first introduce the background of transposable genes in Section~\ref{bio}. Then we
describe the setup for the LCS problem we study in Section~\ref{setup}. In Sections~\ref{s1}--\ref{s4}, we transform the LCS problem into corresponding graph theory problems and design algorithms. We finish with conclusions and discussions in Section~\ref{con}. In Appendix~\ref{app}, we apply our algorithms from Scenarios 1--4 to gene sequences of \emph{Escherichia coli} individuals and find some possible transposable genes. All the algorithms in this paper have been implemented in Python. For the code and data files, see https://github.com/YueWangMathbio/Transposon.

\section{Biological background of transposable genes}
\label{bio}
In this section, we review how gene sequences become different, and introduce the specific biological problem we want to study. We also explain how Scenarios 1--4 of the LCS problem are derived from the biological problem.

The nucleotide sequence can be changed by various events, such as inversion, insertion, deletion, and duplication \cite{ivics2010expanding}. Such rearrangement events lead to the existence of transposons (also called transposable elements or jumping genes), which are DNA sequences that can change their relative positions within the genome. Transposons were first discovered in maize by Barbara McClintock \cite{mcclintock1950origin}. Transposons have various types: long terminal repeats (LTR) retrotransposons, Dictyostelium intermediate repeat sequence (DIRS)-like elements, Penelope-like elements (PLE), long interspersed elements (LINE), short interspersed elements (SINE), terminal inverted repeats (TIR), Helitrons, etc. \cite{makalowski2019transposable}.

Transposons are common in various species. For the human genome, the proportion of transposons is approximately 44\%, although most of transposons are inactive \cite{mills2007transposable}. Transposons can participate in controlling gene expression \cite{zhou2020dna}, and they are related to several diseases, such as cancer \cite{denicola2015utility}, hemophilia \cite{kazazian1988haemophilia}, and porphyria \cite{mustajoki1999insertion}. Transposons can drive rapid phenotypic variations, which cause complicated cell behaviors \cite{zhou2014multi,niu2015phenotypic,niu2019transposable,chen2016overshoot,jiang2017phenotypic}. Transposons can be used to detect cancer drivers \cite{noorani2020crispr} and potential therapies \cite{angelini2022model}. Transposons are also essential for the development of \emph{Oxytricha trifallax} \cite{nowacki2009functional}, antibiotic resistance of bacteria \cite{babakhani2018transposons}, and the proliferation of various cells \cite{rahrmann2009identification,xia2020pde,dessalles2022naive}. With the presence of transposons, the regulation between genes might be affected, which is a challenge for inferring the structures of gene regulatory networks \cite{wang2022inference} and general transcriptome analysis \cite{sha2020inference,zhou2021dissecting}. 

When transposons have been determined, we can use them to compare the genomes of different species, and such comparisons can be combined with other measurements between species, such as metrics on developmental trees \cite{wang2022two}. Such comparisons can be also extended to different tissues to help with the prediction of tissue transplantation experiments \cite{wang2021inference}. Besides, for some species, cells at different positions have different gene expression patterns, which might be related to transposons \cite{wang2020biological}. 

Many transposons are as short as $10^2-10^3$ base pairs, shorter than a general gene \cite{payer2019transposable}. To determine such short transposons, one needs to analyze the original AGCT nucleotide sequences. There have been many algorithms developed to determine short transposons from nucleotide sequences, such as MELT (Mobile Element Locator Tool) \cite{gardner2017mobile}, ERVcaller (Endogenous RetroVirus caller) \cite{chen2019ervcaller}, and TEMP2 (Transposable Elements Movements Present 2) \cite{yu2021benchmark}. Different algorithms may only determine certain types of transposons. For more details, readers may refer to other papers \cite{orozco2020measuring,goubert2022beginner}. They use raw DNA sequencing data, which only contain imperfect information about the true DNA sequence, and the data quality depends on some factors that vary across different datasets \cite{evrony2021applications}. Besides, they need a corresponding genome or reference transposon libraries. 

There are gross DNA changes that associate with many genes, also called genomic rearrangements \cite{gu2008mechanisms}. Such rearrangements include inversion, transposition, fusion, and fission \cite{bohnenkamper2021computing}. To determine such gross genomic rearrangements, one first needs to convert nucleotide sequences into gene sequences by annotation. For two different gene sequences, the general idea of determining rearrangements is to calculate the minimal number of operations required for transforming one sequence into the other \cite{tesler2002efficient}. This defines an editing distance between gene sequences, which can be used to compare the evolution distance between species and construct the phylogenetic tree \cite{terauds2019maximum}. There have been many algorithms developed to determine genomic rearrangements. They consider different scenarios: whether the gene sequence is linear or circular, whether genes have unique labels, and what operations can be taken. Kececioglu and Sankoff only consider inversion for linear sequences with unique gene labels \cite{kececioglu1995exact}; Blanchette et al. consider inversion and transposition for circular sequences with unique gene labels \cite{blanchette1996parametric}; Tesler considers inversion, transposition, fusion, and fission for linear and circular sequences with unique gene labels \cite{tesler2002efficient}; Terauds and Sumner study circular sequences with representation theory tools \cite{terauds2019maximum}; Bohnenk{\"a}mper et al. consider linear and circular sequences with possibly duplicated labels \cite{bohnenkamper2021computing}. There are also systematic pipelines for determining rearrangements from whole-genome assemblies \cite{goel2019syri,mitsuhashi2020pipeline}. Nevertheless, these methods consider large-scale rearrangements, and minimize the number of operations to transform one gene sequence into the other, not concrete genes that can change their locations. Besides, these methods only compare two gene sequences, not more. Their results depend on the set of possible operations, which is somewhat arbitrary. 

In this paper, we consider a mesoscopic scenario between the genomic rearrangement situation and the short transposon situation: \emph{Given accurately annotated gene sequences (not nucleotide sequences) from different individuals, determine individual genes (not short nucleotide segments or long gene strands) that can change their locations (transposable).} This provides a qualitative description for the stability of genes, which can guide gene editing \cite{wang2022horizontal} and phylogenetics \cite{kang2014flexibility}. The proportion of fixed genes quantifies the robustness of the genome. We aim at minimizing the number of genes to move. When there are only two gene sequences, this is equivalent to calculating genomic arrangements, where the only allowed operation is single-gene transposition.

In the copy-paste (duplication) case and deletion case, we can compare the numbers of copies of genes for different individuals to determine the transposable genes that have changed their copy numbers. In the inversion case, we can check the direction of genes to determine transposable genes that have changed their orientations \cite{lin2011changes}. In the cut-paste (insertion) case, the compositions of gene sequences are the same, but the orders of genes differ. It is not straightforward to uniquely determine which genes have changed their relative locations. In this case, we need to introduce the LCS problem. 

\section{Problem setup}
\label{setup}
Given raw DNA sequencing data, the first step is to transform them into gene sequences. This can be done with various genome annotation tools \cite{salzberg2019next,bruuna2021braker2}. For simplicity, we replace the gene names by numbers $1,\ldots,n$.

For some species, the DNA is a line \cite{rowley2018organizational}. We can represent this DNA as a linear gene sequence of distinct numbers that represent genes: $(1,2,3,4)$. If some genes change their transcriptional orientations, we can simply detect them and handle the remaining genes. Now a linear DNA naturally has a direction (from 5' end to 3' end), thus $(1,2,3,4)$ and $(4,3,2,1)$ are two different gene sequences. 

Consider two linear gene sequences from different individuals: $(1,2,3,4)$ and $(1,4,2,3)$. We can intuitively detect that gene $4$ changes its relative position, and should be regarded as a transposable gene. However, changing the positions of genes $2,3$ can also transform one sequence into the other. The reason that we think gene $4$ (not genes $2,3$) changes its relative position is that the number of genes we need to move is smaller. Nevertheless, the number of genes that change their relative locations is difficult to determine. We can consider the complement of transposable genes, i.e., genes that do not change their relative positions. These fixed genes can be easily defined as the LCS of the given gene sequences. Here a common subsequence consists of some genes (not necessarily adjacent, different from a substring) that keep their relative orders in the original sequences. \emph{Thus transposable genes are the complement of this LCS.} Notice that the LCS might not be unique. We classify genes by their relations with the LCS(s). The motivation of classifying transposable genes with respect to the intersection and union of LCSs is similar to defining essential variables with Markov boundaries in causal inference \cite{wang2020causal}.

\begin{definition}
	A gene is \textbf{proper-transposable} if it is not contained in any LCS. A gene is \textbf{non-transposable} if it is contained in every LCS. A gene is \textbf{quasi-transposable} if it is contained in some but not all LCSs.
	\label{def1}
\end{definition}

In the example of $(1,2,3,4)$ and $(1,4,2,3)$, the unique LCS is $(1,2,3)$. Thus $4$ is proper-transposable, and $1,2,3$ are non-transposable. In the following, we consider other scenarios, where the proper/quasi/non-transposable genes still follow Definition~\ref{def1}, but the definition of the LCS differs.

For some species, the DNA is a circle, not a line \cite{verma2019architecture}. A circular DNA also has a natural direction (from 5' end to 3' end), and we use the clockwise direction to represent this natural direction. In the circular sequence scenario, a common subsequence is a circular sequence that can be obtained from each circular gene sequence by deleting some genes. See Fig.~\ref{sce2} for two circular gene sequences and their LCS. Notice that we can rotate each circular sequence for a better match.

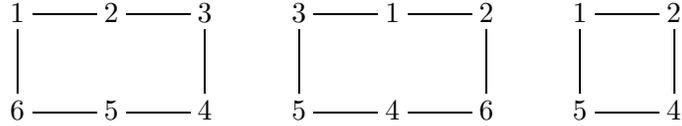
\begin{figure}[htb]
	\begin{center}
		$\xymatrix{
			1\ar@{-}[r]\ar@{-}[d]&2\ar@{-}[r]	& 3\ar@{-}[d] &3 \ar@{-}[r]\ar@{-}[d]&1\ar@{-}[r]&2\ar@{-}[d]&1\ar@{-}[r]\ar@{-}[d]&2\ar@{-}[d]\\
			6&	5\ar@{-}[l] & 4\ar@{-}[l]    &5 &4\ar@{-}[l]  &6   \ar@{-}[l]  &5&4\ar@{-}[l] 
		}$
	\end{center}
	\caption{Two circular gene sequences without duplicated genes and their LCS, corresponding to Scenario 2.}
	\label{sce2}
\end{figure}

A gene might have multiple copies (duplicated) in a gene sequence \cite{ibal2019information}. Notice that the definition of the transposable gene is a gene (specific DNA sequence) that has the ability to change its position, not a certain copy of a gene that changes its position. This means transposable genes should be defined for genes, not gene copies. Thus we should only consider common subsequences that consist of all or none copies of the same gene. When calculating the length of a common subsequence, we should count genes, not gene copies. Consider two linear sequences $(4,1,2,1,1,3,2,4,1,1)$ and $(4,1,2,3,1,1,2,1,1,4)$. If we consider any subsequences, the LCS is $(4,1,2,1,1,2,1,1)$; if we only consider subsequences that contain all or none copies of the same gene, but count the length by copies, the LCS is $(1,2,1,1,2,1,1)$; if we only consider subsequences that contain all or none copies of the same gene, and count the length by genes, the unique LCS is $(4,2,3,2,4)$, and gene $1$ is proper-transposable.

When we consider circular gene sequences with duplicated genes, we should still only consider subsequences that consist of all or none copies of the same gene, and calculate the length by genes. Notice that circular sequences can be rotated. See Fig.~\ref{sce4} for two circular gene sequences with duplicated genes and their LCS.

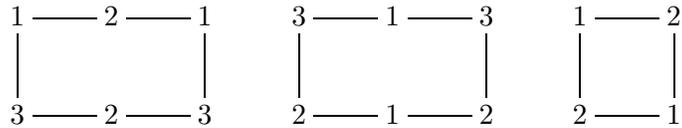
\begin{figure}[htb]
	\begin{center}
		$\xymatrix{
			1\ar@{-}[r]\ar@{-}[d]&2\ar@{-}[r]	& 1\ar@{-}[d] &3 \ar@{-}[r]\ar@{-}[d]&1\ar@{-}[r]&3\ar@{-}[d]&1\ar@{-}[r]\ar@{-}[d]&2\ar@{-}[d]\\
			3&	2\ar@{-}[l] & 3\ar@{-}[l]    &2 &1\ar@{-}[l]  &2   \ar@{-}[l]  &2&1\ar@{-}[l] 
		}$
	\end{center}
	\caption{Two circular gene sequences with duplicated genes and their LCS, corresponding to Scenario 4.}
	\label{sce4}
\end{figure}

We have turned the problem of determining transposable genes into finding the LCS of several gene sequences. Depending on whether the gene sequences are linear or circular, and whether genes have multiple copies, the problem can be classified into four scenarios:

\noindent \textbf{Scenario 1}: Consider $m$ linear sequences of genes $1,\ldots,n$, where each gene has only one copy in each sequence. Determine the longest linear sequence that is a common subsequence of these $m$ sequences.

\noindent \textbf{Scenario 2}: Consider $m$ circular sequences of genes $1,\ldots,n$, where each gene has only one copy in each sequence. Determine the longest circular sequence that is a common subsequence of these $m$ sequences. Here circular sequences can be rotated.

\noindent \textbf{Scenario 3}: Consider $m$ linear sequences of genes $1,\ldots,n$, where each gene can have multiple copies in each sequence. Determine the longest linear sequence that is a common subsequence of these $m$ sequences. Only consider subsequences that consist of all or none copies of the same gene, and calculate the length by genes.

\noindent \textbf{Scenario 4}: Consider $m$ circular sequences of genes $1,\ldots,n$, where each gene can have multiple copies in each sequence. Determine the longest circular sequence that is a common subsequence of these $m$ sequences. Only consider subsequences that consist of all or none copies of the same gene, and calculate the length by genes. Here circular sequences can be rotated.

These four scenarios correspond to different algorithms, and will be discussed separately.

\section{Linear sequences without duplicated genes}
\label{s1}
In Scenario 1, consider $m$ linear gene sequences, where each sequence contains $n$ genes $1,\ldots,n$. Each gene has only one copy. For such permutations of $1,\ldots,n$, we need to find the LCS. 

\subsection{A graph representation of the problem}
Brute-force searching that tests whether each subsequence appears in all sequences is not applicable, since the time complexity is exponential in $n$. To develop a polynomial algorithm, we first design an auxiliary directed graph $\mathcal{G}$.

\begin{definition}
	For $m$ linear sequences with $n$ non-duplicated genes, the corresponding \textbf{auxiliary graph} $\mathcal{G}$ is a directed graph, where each vertex is a gene $g_i$, and there is a directed edge from $g_i$ to $g_j$ if and only if $g_i$ appears before $g_j$ in all $m$ sequences.
\end{definition}

A directed path $g_1\to g_2\to g_3\to\cdots\to g_4\to g_5$ in $\mathcal{G}$ corresponds to a common subsequence $(g_1,g_2,g_3,\ldots,g_4,g_5)$ of $m$ sequences, and vice versa. We add $0$ to the head of each sequence and $n+1$ to the tail. Then the LCS must start at $0$ and end at $n+1$. \emph{The problem of finding the LCS becomes finding the longest path from $0$ to $n+1$ in $\mathcal{G}$.} See Fig.~\ref{ag} for an example of using the auxiliary graph to determine transposable genes.  This auxiliary graph $\mathcal{G}$ has no directed loop (acyclic). If there exists a loop $g_1\to g_2\to g_3\to\cdots\to g_4\to g_1$, then $g_1$ is prior to $g_4$ and $g_4$ is prior to $g_1$ in all sequences, a contradiction.

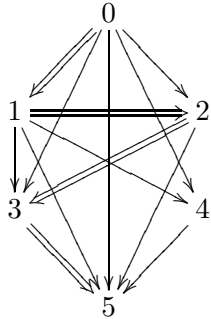
\begin{figure}[htb]
	\begin{center}
		$\xymatrix{
			& 0\ar@2{->}[dl]\ar[dr]\ar[ddl]\ar[ddr]\ar[ddd] &\\
			1\ar[d]\ar@2{->}[rr]\ar[drr]\ar[ddr] &     & 2 \ar@2{->}[dll]\ar[ddl]  \\
			3\ar@2{->}[dr]     &     & 4\ar[dl]      \\
			&5&         }$
	\end{center}
	\caption{The auxiliary graph $\mathcal{G}$ of two sequences $([0],1,2,3,4,[5])$ and $([0],1,4,2,3,[5])$. The unique longest path (double arrows) from $0$ to $5$ is $0\to 1\to 2\to 3\to 5$, meaning that the unique LCS is $([0],1,2,3,[5])$. Thus $1,2,3$ are non-transposable, and $4$ is proper-transposable.}
	\label{ag}
\end{figure}

\subsection{Find the longest path}
Determining the longest path between two vertices in a directed acyclic graph can be solved by a standard dynamic programming algorithm. For a vertex $g_i\in\{0,1,\ldots,n\}$, consider the longest path from $g_i$ to $n+1$. Since there exists an edge $g_i\to n+1$, and $\mathcal{G}$ is acyclic, this longest path exists. If the longest path is not unique, assign one arbitrarily. 
\begin{definition}
	Define $F_+(g_i)$ to be the length of the longest path from $g_i$ to $n+1$ in $\mathcal{G}$, and $H_+(g_i)$ to be the vertex next to $g_i$ in this path. 
\end{definition}
$F_+$ and $H_+$ can be calculated recursively: For one gene $g_i$, consider all genes $g_j$ with an edge $g_i\to g_j$ in $\mathcal{G}$. The gene $g_j$ with the largest $F_+(g_j)$ is assigned to be $H_+(g_i)$, and $F_+(g_i)=F_+(g_j)+1$. If $g_l\to n+1$ is the only edge that starts from gene $g_l$, then $F_+(g_l)=1$, and $H_+(g_l)=n+1$. In other words,
\[H_+(g_i)={\mathrm{argmax}}_{\{g_j\mathrm{ with }g_i\to g_j\}}\,F_+(g_j);\]
\[F_+(g_i)=1+F_+[H_+(g_i)].\]
Then $0\to H_+(0)\to H_+^{2}(0)\to H_+^{3}(0)\to 
\cdots{}\to H_+^{f-1}(0)\to H_+^{f}(0)=n+1$, denoted by $\mathcal{L}_0$, is a longest path in $\mathcal{G}$. Here $f=F_+(0)$, and $H_+^i$ is the $i$th iteration of $H_+$.

\subsection{Determine proper- and quasi-transposable genes}
\begin{definition}
	For $g_i\in\{1,\ldots,n, n+1\}$, define $F_-(g_i)$ to be the length of the longest path from $0$ to $g_i$ in $\mathcal{G}$, and $H_-(g_i)$ to be the vertex prior to $g_i$ in this path. 
\end{definition}
$F_-$ and $H_-$ can be calculated similar to $F_+$ and $H_+$. We can see that $F_+(g_i)+F_-(g_i)$ is the length of 
\[0=H_-^{F_-(g_i)}(g_i)\to H_-^{F_-(g_i)-1}(g_i) \to \cdots \to H_-(g_i) \to g_i\]
\[\to H_+(g_i) \to \cdots\to H_+^{F_+(g_i)-1}(g_i)\to H_+^{F_+(g_i)}(g_i)=n+1,\]
a longest path from $0$ through $g_i$ to $n+1$. {Therefore, a gene $g_i$ is proper-transposable if and only if $F_+(g_i)+F_-(g_i)<F_+(0)$.
	
	To further distinguish quasi-transposable genes and non-transposable genes, we need the following lemma.
	\begin{lemma}
		In Scenario 1 of linear sequences without duplicated genes, each quasi-transposable gene $g_i$ has a corresponding quasi-transposable gene $g_j$, so that $F_+(g_i)=F_+(g_j)$.
		\label{l1}
	\end{lemma}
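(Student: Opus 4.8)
The plan is to work entirely inside the auxiliary acyclic graph $\mathcal{G}$ and to exploit the level function $F_-$ as a strictly monotone invariant along edges. The single observation driving everything is that for any edge $u\to v$ of $\mathcal{G}$ we have $F_-(v)\ge F_-(u)+1$: a longest $0\to u$ path extended by the edge $u\to v$ is a $0\to v$ path of length $F_-(u)+1$. Hence $F_-$ is strictly increasing along every directed path, so any two distinct vertices lying on one directed path carry distinct $F_-$ values. Since (as recorded in the excerpt) every common subsequence corresponds to a directed path in $\mathcal{G}$, this yields the key principle: no common subsequence, in particular no longest one, can contain two distinct genes with equal $F_-$.

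Second, I would record the companion fact that on a longest path $0=v_0\to v_1\to\cdots\to v_L=n+1$, with $L:=F_+(0)$, the level equals the position, i.e. $F_-(v_p)=p$ for every $p$. The inequality $F_-(v_p)\ge p$ is clear, and if it were strict then splicing a longer $0\to v_p$ path in front of the tail $v_p\to\cdots\to v_L$ would produce a $0\to(n+1)$ path of length exceeding $L$, a contradiction. Thus a longest path realises each level $0,1,\dots,L$ exactly once.

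With these two facts the partner gene is constructed directly. Given a quasi-transposable gene $g_i$, by Definition~\ref{def1} there is a longest common subsequence $\mathcal{L}$ avoiding $g_i$ and another longest common subsequence $\mathcal{P}$ containing $g_i$. Set $k=F_-(g_i)$, noting $0<k<L$ because $g_i$ is a genuine gene and hence $g_i\neq 0,n+1$. Let $g_j$ be the unique vertex of $\mathcal{L}$ at level $k$, which exists and is a genuine gene by the second fact. Then $F_-(g_j)=k=F_-(g_i)$, while $g_i\neq g_j$ since $g_i\notin\mathcal{L}$ but $g_j\in\mathcal{L}$. By the key principle no longest common subsequence can contain both $g_i$ and $g_j$, which is exactly the required disjointness.

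It remains to verify that $g_j$ is itself quasi-transposable, and this is the one place demanding care, since the key principle only furnishes genes that are excluded together. On one hand $g_j$ lies on $\mathcal{L}$, so it belongs to at least one longest common subsequence. On the other hand $g_j\notin\mathcal{P}$: were $g_j$ on $\mathcal{P}$, then $\mathcal{P}$ would carry the two distinct equal-level genes $g_i,g_j$, contradicting the key principle. Hence $g_j$ lies in some but not all longest common subsequences, i.e. it is quasi-transposable, completing the argument. I do not anticipate a genuine obstacle; the only subtlety is remembering to establish both membership and non-membership for $g_j$, so that it qualifies as quasi-transposable rather than merely non-transposable.
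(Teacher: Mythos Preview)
Your proof is correct, and it takes a genuinely different route from the paper's.

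The paper argues by contradiction: assuming $g_i$ has no mutually exclusive quasi-transposable partner, it deduces that $g_i$ is comparable in $\mathcal{G}$ to every vertex on a longest path $\mathcal{L}^*$ that omits $g_i$, and then inserts $g_i$ between two consecutive vertices of $\mathcal{L}^*$ to produce a strictly longer path. Your argument is instead constructive and level-based: you use the strict monotonicity of $F_-$ along edges to show that each longest path realises every level $0,1,\dots,L$ exactly once, and then take $g_j$ to be the unique gene of an $g_i$-avoiding longest path $\mathcal{L}$ that sits at the same level $F_-(g_i)$. The equal-level principle then forbids $g_i$ and $g_j$ from coexisting on any directed path, and the second longest path $\mathcal{P}\ni g_i$ witnesses that $g_j$ is not non-transposable.

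What each approach buys: the paper's argument is local in the edge structure and never names the partner gene explicitly, whereas yours identifies $g_j$ concretely and yields the extra structural information that mutually exclusive pairs always share the same $F_-$ value. Your route also leans directly on the quantities $F_-$ already introduced in the paper, so it dovetails nicely with the surrounding algorithmic framework. Both proofs are short; yours is arguably the more transparent of the two.
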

	\begin{proof}
		In a longest path from $0$ to $n+1$, if gene $g_i$ is in the $c$th position counted from the end, then from the definition of $F_+(g_i)$, we can directly see that $F_+(g_i)\ge c-1$. If $F_+(g_i)>c-1$, we can use a path from $g_i$ to $n+1$ with length $F_+(g_i)$ to replace the path from $g_i$ to $n+1$ in the longest path, and obtain a longer path, a contradiction. Since $\mathcal{G}$ has no cycle, this path does not visit the same vertex twice. Thus we must have $F_+(g_i)=c-1$. The position of $g_i$ in a longest path from $0$ to $n+1$ is fixed.
		
		If $g_i$ is a quasi-transposon, choose a longest path from $0$ to $n+1$ that does not contain $g_i$. The gene $g_j$ at the $[F_+(g_i)+1]$th position counted from the end must have $F_+(g_j)=F_+(g_i)$, due to its position. In a longest path from $0$ to $n+1$ that contains $g_i$, $g_i$ takes the only position that $g_j$ can take, and this path cannot contain $g_j$. Therefore, $g_j$ is also quasi-transposable.
	\end{proof}
	
	After excluding proper-transposable genes, we just need to compare $F_+(\cdot)$ for remaining genes. Those with unique values of $F_+(\cdot)$ are non-transposable, and others that share the same value of $F_+(\cdot)$ with other genes are quasi-transposable.
	
	Lemma~\ref{l1} provides a restriction on the number of quasi-transposable genes: for any LCS, the number of quasi-transposable genes in this LCS is no larger than the number of quasi-transposable genes not in this LCS. The reason is that each quasi-transposable gene $g_i$ has a corresponding quasi-transposable gene $g_j$ with $F_+(g_i)=F_+(g_j)$. The possible position of $g_j$ has been occupied by $g_i$, and $g_j$ cannot be in the LCS.
}

\subsection{Algorithms and complexities}

\begin{algorithm}[!htbp]
	\caption{Detailed workflow of determining proper-transposable genes and quasi-transposable genes in Scenario 1, preparation stage. One LCS is also outputted.}
	\label{alg1}
	\vspace{-\bigskipamount}
	\ \\
	\begin{enumerate}
		{	\item \textbf{Input} 
			
			\quad $m$ linear sequences of genes $1,\ldots,n$. No duplicated genes.
			
			\item \textbf{Modify} the sequences:
			
			\quad Add $0$ to the head, and $n+1$ to the tail of each sequence 
			
			\item \textbf{Construct} the auxiliary graph $\mathcal{G}$:
			
			\quad Vertices of $\mathcal{G}$ are all the genes $1,\ldots,n$
			
			\quad \textbf{For} each pair of genes $g_i,g_j$
			
			\quad\quad   \textbf{If} $g_i$ is prior to $g_j$ in all $m$ sequences
			
			\quad\quad\quad \textbf{Add} a directed edge $g_i\to g_j$ in $\mathcal{G}$
			
			\quad\quad\textbf{End} of if
			
			\quad\textbf{End} of for
			
			\item \textbf{Calculate} $F_+(\cdot)$ and $H_+(\cdot)$ for each gene $g_i$ in $0,1,\ldots,n$ recursively; \textbf{calculate} $F_-(\cdot)$ and $H_-(\cdot)$ for each gene $g_i$ in $1,\ldots,n, n+1$ recursively: 
			
			\quad $H_+(g_i)={\mathrm{argmax}}_{\{g_j\mathrm{ with }g_i\to g_j\}}\,F_+(g_j)$ 
			
			\quad \% If $g_j$ with $g_i\to g_j$ that maximizes $F_+(g_j)$ is not unique, choose one randomly
			
			\quad $F_+(g_i)=1+F_+[H_+(g_i)]$ 
			
			\quad $H_-(g_i)={\mathrm{argmax}}_{\{g_j\mathrm{ with }g_j\to g_i\}}\,F_-(g_j)$
			
			\quad \% If argmax is not unique, choose one randomly
			
			\quad $F_-(g_i)=1+F_-[H_-(g_i)]$
			
			\item \textbf{Construct} a longest path $\mathcal{L}_0$ from $0$ to $n+1$:
			
			\quad $0\to H_+(0)\to H_+^{2}(0)\to H_+^{3}(0)\to 
			\cdots{}\to H_+^{f-1}(0)\to H_+^{f}(0)=n+1$
			
			\quad \% Here $f=F_+(0)$, and $H_+^i$ is the $i$th iteration of $H_+$
			
			\item \textbf{Output} $F_+(\cdot),H_+(\cdot),F_-(\cdot),H_-(\cdot),\mathcal{L}_0$
		}
	\end{enumerate}
\end{algorithm}

\begin{algorithm}[!htbp]
	\caption{{Detailed workflow of determining proper-transposable genes and quasi-transposable genes in Scenario 1, output stage.}}
	\label{alg2}
	\vspace{-\bigskipamount}
	\ \\
	\begin{enumerate}
		{	\item \textbf{Input} 
			
			\quad $F_+(\cdot),F_-(\cdot)$ calculated from Algorithm~\ref{alg1}		
			
			\item \textbf{For} each gene $g_i$ in $1,\ldots,n$
			
			\quad \textbf{If} $F_+(g_i)+F_-(g_i)<F_+(0)$
			
			\quad\quad \textbf{Output} $g_i$ is a proper-transposable gene
			
			\quad \textbf{Else}
			
			\quad\quad \textbf{Record} the value of $F_+(g_i)$
			
			\quad \textbf{End} of if
			
			\textbf{End} of for
			
			\item \textbf{For} each gene $g_i$ in $1,\ldots,n$ that has not been determined as proper-transposable
			
			\quad \textbf{If} $F_+(g_i)$ appears more than once in the record
			
			\quad\quad \textbf{Output} $g_i$ is a quasi-transposable gene
			
			\quad \textbf{Else}
			
			\quad\quad \textbf{Output} $g_i$ is a non-transposable gene
			
			\quad \textbf{End} of if
			
			\textbf{End} of for
			
			\item\textbf{Output}: whether each gene is proper/quasi/non-transposable
			
		}
	\end{enumerate}
\end{algorithm}

We summarize the above method as Algorithms~\ref{alg1},\ref{alg2}. If we have known that the LCS is unique, then we just need to apply Algorithm~\ref{alg1}, so that genes in $\mathcal{L}_0$ are non-transposable, and genes not in $\mathcal{L}_0$ are proper-transposable. We have reported Algorithm~\ref{alg1} previously \cite{kang2014flexibility,wang2018some}. Algorithm~\ref{alg1} is kept here to make the story complete. Assume we have $m$ sequences with length $n$. The time complexities of Steps 2-5 in Algorithm~\ref{alg1} are $\mathcal{O}(m)$, $\mathcal{O}(mn^2)$, $\mathcal{O}(n)$, $\mathcal{O}(n)$. {The time complexities of Step 2 and Step 3 in Algorithm~\ref{alg2} are $\mathcal{O}(n)$ and $\mathcal{O}(n)$.} The overall time complexity of determining transposable genes in Scenario 1 by Algorithms~\ref{alg1},\ref{alg2} is $\mathcal{O}(mn^2)$. The space complexity is trivially $\mathcal{O}(mn+n^2)$. 

{When $m=2$, Algorithm~\ref{alg1} has time complexity $\mathcal{O}(n^2)$. This is slower than an algorithm by Masek and Paterson, which has time complexity $\mathcal{O}(n^2/\log n)$ \cite{masek1980faster}. Algorithm~\ref{alg2} studies whether one gene appears in all LCSs. We do not know other algorithms that solve the same problem.}

\section{Circular sequences without duplicated genes}
\label{s2}

In Scenario 2, consider $m$ circular gene sequences, where each sequence contains $n$ genes $1,\ldots,n$. Each gene has only one copy in each sequence. For such circular permutations of $1,\ldots,n$, we need to find the LCS. Assume the length of the LCS is $n-k$.

\subsection{Find one LCS}
We first randomly choose a gene $g_i$. Cut all circular sequences at $g_i$ and expand them to be linear sequences. For example, the circular sequences in Fig.~\ref{sce2} cut at $1$ are correspondingly $(1,2,3,4,5,6)$ and $(1,2,6,4,5,3)$. Using Algorithm~\ref{alg1}, we can find $\mathcal{L}_i$ that begins with $g_i$, which is one LCS of all expanded linear sequences. In the above example, the longest common linear subsequence starting from $1$ is $(1,2,4,5)$. If $g_i$ is a non-transposable gene or a quasi-transposable gene, then $\mathcal{L}_i$ (glued back to a circle) is a longest common circular subsequence. If $g_i$ is a proper-transposable gene, then $\mathcal{L}_i$ is shorter than the longest common circular subsequence. In Fig.~\ref{sce2}, gene $1$ is non-transposable, and $(1,2,4,5)$ (glued) is the longest common circular subsequence.

We do not know if $\mathcal{L}_i$ (glued) is an LCS (whether containing $g_i$ or not) for all circular sequences. If there is a longer common subsequence, it should contain genes that are not in $\mathcal{L}_i$. Consider four variables $\mathcal{L}$, $g$, $C$, and $\mathcal{S}$, whose initial values are $\mathcal{L}_i$, $g_i$, the length of $\mathcal{L}_i$, and the complement of $\mathcal{L}_i$. These variables contain information on the longest common linear subsequence that we have found during this procedure.

Choose a gene $g_j$ in $\mathcal{S}$, and cut all circular gene sequences at $g_j$. Apply Algorithm~\ref{alg1} to find $\mathcal{L}_j$, which is the longest in common subsequences that contain $g_j$. If the length of $\mathcal{L}_j$ is larger than $C$, set $\mathcal{L}$ to be $\mathcal{L}_j$, set $g$ to be $g_j$, set $C$ to be the length of $\mathcal{L}_j$, and set $\mathcal{S}$ to be the complement of $\mathcal{L}_j$. Otherwise, keep $\mathcal{L}$, $g$, $C$, and $\mathcal{S}$ still.

Choose another gene $g_l$ in $\mathcal{S}$ which has not been chosen before, and repeat this procedure. This procedure terminates when all genes in $\mathcal{S}$ have been chosen and cut. Denote the final values of $\mathcal{L}$, $g$, $C$, and $\mathcal{S}$ by $\mathcal{L}_0$, $g_0$, $C_0$, and $\mathcal{S}_0$. Here $\mathcal{S}_0$ is the complement of $\mathcal{L}_0$.

During this procedure, if the current $g$ is a proper-transposable gene, then $\mathcal{S}$ contains a non-transposable gene or a quasi-transposable gene, which has not been chosen. Thus $\mathcal{L}$, $g$, $C$, $\mathcal{S}$ will be further updated. If the current $g$ is a non-transposable gene or a quasi-transposable gene, then $C$ has reached its maximum, and $\mathcal{L}$, $g$, $C$, $\mathcal{S}$ will not be further updated. This means $\mathcal{L}_0$ is a longest common circular subsequence, and $C_0$ is the length of the LCS, $n-k$. Also, the total number of genes being chosen and cut is $k+1$. All $k$ genes in $\mathcal{S}_0$ and $g_0$ are chosen and cut. A gene $g_t$ in $\mathcal{L}_0$ (excluding $g_0$) is non-transposable or quasi-transposable, and cannot be chosen and cut. The reason is that it cannot be chosen before $g_0$ is chosen (only proper-transposable genes can be chosen before $g_0$ is chosen), and it cannot be chosen after $g_0$ is chosen ($g_t\notin \mathcal{S}_0$).   


\subsection{Determine quasi-transposable genes}
For each gene $g_p\in\mathcal{S}_0$, apply Algorithm~\ref{alg1} to calculate $C_p$, the length of the LCS that contains $g_p$. If $C_p<C_0$, $g_p$ is a proper-transposable gene. Otherwise, $C_p=C_0$ means $g_p$ is a quasi-transposable gene. We have found all proper-transposable genes. If all genes in $\mathcal{S}_0$ are proper-transposable, then all genes in $\mathcal{L}_0$ are non-transposable, and the procedure terminates. 

If $\mathcal{S}_0$ contains quasi-transposable genes, then $\mathcal{L}_0$ also has quasi-transposable genes. To determine quasi-transposable genes in $\mathcal{L}_0$, we need the following lemma.
\begin{lemma}
	In Scenario 2, choose a quasi-transposable gene $g_p$ and cut the circular sequences at $g_p$ to obtain linear sequences. A proper-transposable gene for the circular sequences is also a proper-transposable gene for the linear sequences; a non-transposable gene for the circular sequences is also a non-transposable gene for the linear sequences. 
	\label{nl}
\end{lemma}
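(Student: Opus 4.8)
The plan is to reduce both claims to a single structural fact: after cutting the circular sequences at $g_p$, the maximal length of a common linear subsequence equals $n-k$, the length of the longest common circular subsequence, and every common linear subsequence, when glued back into a circle, is a common circular subsequence of the same length. First I would record the easy half of this correspondence. Cutting each circular sequence $C_i$ at $g_p$ yields a linear sequence $L_i$ that is simply $C_i$ read clockwise starting from $g_p$; because Scenario 2 has no duplicated genes, $g_p$ occurs exactly once and the cut is unambiguous and length-preserving. If $S$ is any common subsequence of $L_1,\ldots,L_m$ in linear order, then the genes of $S$ appear in clockwise order in every $C_i$, so $S$ glued into a circle is a common circular subsequence of length $|S|$. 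Consequently the maximal common linear length is at most $n-k$.

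The crux is the reverse inequality, and this is exactly where the hypothesis that $g_p$ is quasi-transposable enters. Being quasi-transposable, $g_p$ lies in at least one longest common circular subsequence $T$, of length $n-k$. I would cut $T$ at $g_p$ as well: since the genes of $T$ occur in clockwise order in each $C_i$ and $T$ contains $g_p$, reading $T$ clockwise from $g_p$ produces a linear sequence $T'$ of length $n-k$ that is a subsequence of every $L_i$. Hence the maximal common linear length is at least $n-k$, and therefore exactly $n-k$; in particular a common linear subsequence of length $n-k$, glued back, is automatically a \emph{longest} common circular subsequence.

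With this length equality in hand, both statements follow by contradiction. If $g$ is proper-transposable for the circular sequences but appeared in some longest common linear subsequence $S$ (necessarily of length $n-k$), then $S$ glued back would be a longest common circular subsequence containing $g$, contradicting that $g$ lies in no longest common circular subsequence; hence $g$ is proper-transposable for the linear sequences. Symmetrically, if $g$ is non-transposable for the circular sequences, I take any longest common linear subsequence $S$: it has length $n-k$, so glued back it is a longest common circular subsequence and must contain $g$, whence $g$ is non-transposable for the linear sequences.

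I expect the only real obstacle to be the reverse inequality of the second paragraph, namely verifying that a longest common circular subsequence through $g_p$ genuinely survives the cut as a full-length linear common subsequence; everything else is bookkeeping. It is worth emphasizing that one does not, and should not, attempt to prove a full bijection between circular and linear longest common subsequences: the map from circular to linear fails in general, because a circular subsequence may wrap past the cut point $g_p$ differently in different sequences. Only the forward direction (linear $\Rightarrow$ circular) together with the single witness $T'$ is needed, and supplying $T'$ is precisely what the quasi-transposability of $g_p$ guarantees.
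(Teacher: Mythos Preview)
Your argument is correct and follows essentially the same line as the paper's proof: establish that the longest common linear subsequence after cutting at $g_p$ has length exactly $n-k$ (using a longest circular subsequence through $g_p$ as the witness, which is where quasi-transposability enters), observe that any longest linear common subsequence glues to a longest circular one, and then read off both claims from the definitions. The paper compresses all of this into three sentences; your version makes the two inequalities and the two contrapositives explicit, but the content is the same.
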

\begin{proof}
	Consider an LCS $\mathcal{L}_p$ for linear sequences cut at $g_p$. Since $g_p$ is a quasi-transposable gene, the length of $\mathcal{L}_p$ is also $n-k$, meaning that $\mathcal{L}_p$ is also an LCS for circular sequences. Now, this lemma is proved by the definition of proper/quasi/non-transposable gene.
\end{proof}

If a gene $g_r$ in $\mathcal{L}_0$ is non-transposable for the circular sequences, then $g_r$ is a non-transposable gene for linear sequences cut at each quasi-transposable gene $g_q\in\mathcal{S}_0$. If a gene $g_s$ in $\mathcal{L}_0$ is quasi-transposable for the circular sequences, then there is a longest common circular subsequence $\mathcal{L}_t$ that does not contain $g_s$, meaning that $\mathcal{L}_t$ contains a quasi-transposable gene $g_t$ not in $\mathcal{L}_0$. Then $g_s$ is a proper/quasi-transposable gene for linear sequences cut at $g_t$. 

Therefore, we can use the following method to determine quasi-transposable genes in $\mathcal{L}_0$. For each quasi-transposable gene $g_q\in\mathcal{S}_0$, cut at $g_q$ and apply Algorithms~\ref{alg1},\ref{alg2} to determine if each gene in $\mathcal{L}_0$ is proper/quasi/non-transposable for the linear gene sequences cut at $g_q$. A gene $g_r\in\mathcal{L}_0$ is non-transposable for the circular sequences if and only if it is non-transposable for linear sequences cut at any quasi-transposable gene $g_q\in\mathcal{S}_0$. A gene $g_s\in\mathcal{L}_0$ is quasi-transposable for the circular sequences if and only if it is proper/quasi-transposable for linear sequences cut at some quasi-transposable gene $g_q\in\mathcal{S}_0$.

When we have determined all quasi-transposable genes in $\mathcal{S}_0$, it might be tempting to apply a simpler approach to determine quasi-transposable genes in $\mathcal{L}_0$: For each quasi-transposable gene $g_q\in\mathcal{S}_0$, cut at $g_q$ and apply Algorithm~\ref{alg1} to find an LCS $\mathcal{L}_q$. A gene in $\mathcal{L}_0$ is non-transposable if and only if it appears in all such $\mathcal{L}_q$. This approach is valid only if the following conjecture holds:

\begin{conjecture}
	In Scenario 2 of circular sequences without duplicated genes, each quasi-transposable gene $g_i$ has a corresponding quasi-transposable gene $g_j$, so that no LCS can contain both $g_i$ and $g_j$.
	\label{c1}
\end{conjecture}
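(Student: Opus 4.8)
The plan is to imitate the proof of Lemma~\ref{l1}, exploiting the fact that cutting the circular sequences at a single gene turns the problem into a linear one governed by an acyclic auxiliary graph. First I would reformulate membership in a longest common subsequence. Cut all $m$ circular sequences at the quasi-transposable gene $g_i$, so that $g_i$ becomes the forced head, and let $\mathcal{G}_{g_i}$ be the resulting auxiliary graph. Because $g_i$ is quasi-transposable, Lemma~\ref{nl} (together with the existence of a circular longest common subsequence through $g_i$) guarantees that the longest common linear subsequence after this cut still has length $n-k$. A subset $S$ of genes with $g_i\in S$ is a longest common circular subsequence if and only if, after the cut, it is a longest path from $g_i$ in $\mathcal{G}_{g_i}$: all $m$ sequences read the genes of $S$ in the same cyclic order, hence after the cut they read them in the same linear order beginning with $g_i$. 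Since $g_i$ is the forced head, every longest path passes through $g_i$, so a gene $g_j$ co-occurs with $g_i$ in some longest common subsequence if and only if $g_j$ lies on some longest path of $\mathcal{G}_{g_i}$. Under this dictionary, a mutual-exclusive partner of $g_i$ is exactly a quasi-transposable gene that avoids every longest path of $\mathcal{G}_{g_i}$.

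With this translation I would argue by contradiction exactly as in Lemma~\ref{l1}. Suppose $g_i$ has no quasi-transposable mutual-exclusive partner. Then every quasi-transposable gene, and trivially every non-transposable gene, lies on some longest path of $\mathcal{G}_{g_i}$; equivalently, for each such gene $a$ there is a longest common circular subsequence $T_a$ containing both $g_i$ and $a$. Since $g_i$ is quasi-transposable, there is also a longest common circular subsequence $S^*$ that omits $g_i$, and every gene of $S^*$ is non-transposable or quasi-transposable. The target is then to show that $S^*\cup\{g_i\}$ is itself a common circular subsequence of size $n-k+1$, contradicting maximality; this is the circular analogue of splicing $g_i$ into the path $\mathcal{L}^*$ in the proof of Lemma~\ref{l1}.

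Here the argument meets its essential difficulty. A set of genes is a common circular subsequence if and only if every one of its triples is oriented consistently across all $m$ sequences, so circular consistency is a \emph{ternary} condition rather than the binary precedence relation that drives the linear case. Showing $S^*\cup\{g_i\}$ is consistent requires every triple $\{g_i,a,b\}$ with $a,b\in S^*$ to be consistently oriented, but the contradiction hypothesis only supplies, for each single gene $a\in S^*$, a witnessing set $T_a\ni g_i,a$; it says nothing about the joint triple $\{g_i,a,b\}$ when $b\notin T_a$. In Lemma~\ref{l1} this gap never arises, because precedence is transitive, so the pairwise edges from the other genes to $g_i$ automatically assemble into a single consistent order and the insertion step goes through. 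In the circular setting the pairwise compatibilities of $a$ and of $b$ with $g_i$ need not close up to consistency of $\{g_i,a,b\}$, so the insertion step can fail. I expect this binary-versus-ternary gap to be the main obstacle, and it is precisely why the statement is recorded as a conjecture rather than a lemma.

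Given this obstruction I would pursue two routes in parallel. The optimistic route is to seek extra structure in the cyclic orders that forces the ternary closure; in particular the case $m=2$ may be tractable, since two cyclic orders constrain triple orientations tightly, and settling it would at least cover the two-individual comparison. The pessimistic, and I suspect more likely, route is to search for a small counterexample with $m=3$ circular permutations: a quasi-transposable gene $g_i$ together with a maximum set $S^*$ omitting it whose elements $a,b$ satisfy that $\{g_i,a\}$ and $\{g_i,b\}$ each extend to maximum sets while $\{g_i,a,b\}$ is inconsistent, arranged so that every quasi-transposable gene individually co-occurs with $g_i$ in some maximum set. Because the genomic application uses $m=3$ sequences, deciding Conjecture~\ref{c1} in this regime is the most informative next step.
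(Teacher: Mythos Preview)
Your instinct is right: the conjecture is false, and the paper does not prove it but \emph{disproves} it with exactly the kind of $m=3$ counterexample you describe in your ``pessimistic route''. So the optimistic route --- attempting to push the Lemma~\ref{l1} argument through by finding extra structure forcing ternary closure --- cannot succeed, and your diagnosis of the binary-versus-ternary obstruction is precisely the reason.

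The paper's counterexample uses three circular sequences on eight genes, namely $(1,2,3,4,5,6,7,8)$, $(1,2,6,5,8,7,4,3)$, and $(1,2,7,8,3,4,5,6)$ (read clockwise). Here the longest common circular subsequence has length $3$, every gene is quasi-transposable, and yet any pair of genes lies together in some longest common subsequence. Thus no gene has a mutual-exclusive quasi-transposable partner, and Conjecture~\ref{c1} fails. Your proposal would be improved by replacing the open-ended search with this explicit witness (or one you construct along the same lines), and by dropping the attempt at a proof for general $m$; the value of your write-up is the clear articulation of why the linear argument breaks, which the paper does not spell out.
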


However, Conjecture~\ref{c1} does not hold. See Fig.~\ref{ce} for a counterexample. All genes are quasi-transposable. Any two quasi-transposable genes are contained in an LCS (length $3$). Thus the simplified approach above does not work.

\begin{figure}[htb]
	\begin{center}
		$\xymatrix{
			1\ar@{-}[r]\ar@{-}[d]&2\ar@{-}[r]	& 3\ar@{-}[d] &1 \ar@{-}[r]\ar@{-}[d]&2\ar@{-}[r]&6\ar@{-}[d]&1\ar@{-}[r]\ar@{-}[d]&2\ar@{-}[r]&7\ar@{-}[d]\\
			8\ar@{-}[d]&	 & 4\ar@{-}[d]    &3\ar@{-}[d] & &5   \ar@{-}[d]  &6\ar@{-}[d]&&8\ar@{-}[d]\\
			7\ar@{-}[r]&6\ar@{-}[r]&5&4\ar@{-}[r]&7\ar@{-}[r]&8&5\ar@{-}[r]&4\ar@{-}[r]&3
		}$
	\end{center}
	\caption{A counterexample with three circular sequences that fails Conjecture~\ref{c1}.}
	\label{ce}
\end{figure}
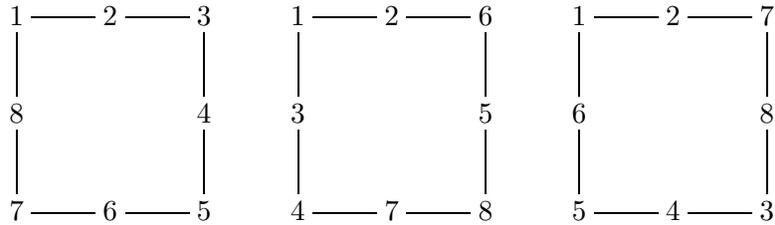

We summarize the above method as Algorithms~\ref{alg3},\ref{alg4}. If we have known that the LCS is unique, then we just need to apply Algorithm~\ref{alg3}, so that genes in $\mathcal{S}_0$ are proper-transposable, and genes not in $\mathcal{S}_0$ are non-transposable. Assume we have $m$ sequences with length $n$, and the length of the LCS is $n-k$. The time complexities of Step 2 and Step 3 in Algorithm~\ref{alg3} are $\mathcal{O}(mn^2)$ and $\mathcal{O}(kmn^2)$. The time complexities of Step 2 in Algorithm~\ref{alg4} is $\mathcal{O}(kmn^2)$. The overall time complexity of determining transposable genes in Scenario 2 by Algorithms~\ref{alg3},\ref{alg4} is $\mathcal{O}(kmn^2)$. The space complexity is trivially $\mathcal{O}(mn+n^2)$. 

{When $m=2$, Algorithm~\ref{alg3} has time complexity $\mathcal{O}(kn^2)$. If $k>\log n$, this is slower than an algorithm by Maes, which has time complexity $\mathcal{O}(n^2 \log n)$ \cite{maes1990cyclic}. Nevertheless, we do not know any reported algorithms that apply for general $m$. Algorithm~\ref{alg4} studies whether one gene appears in all LCSs. We do not know other algorithms that solve the same problem.}

\begin{algorithm}[!htbp]
	\caption{Detailed workflow of determining proper-transposable genes and quasi-transposable genes in Scenario 2, preparation stage.}
	\label{alg3}
	\vspace{-\bigskipamount}
	\ \\
	\begin{enumerate}
		{	\item \textbf{Input}
			
			\quad $m$ circular sequences of genes $1,\ldots,n$, where each gene has only one copy in each sequence
			
			\item \textbf{Choose} a gene $g_i$ randomly 
			
			\textbf{Cut} all circular sequences at $g_i$ and expand them to be linear sequences
			
			\textbf{Apply} Algorithm~\ref{alg1} to find $\mathcal{L}_i$, an LCS in the expanded linear sequences
			
			\textbf{Set} $C$ to be the length of $\mathcal{L}_i$, and \textbf{set} $\mathcal{S}$ to be the complement of $\mathcal{L}_i$
			
			\item \textbf{While} $\mathcal{S}$ has a gene $g_j$ that has not been chosen and cut
			
			\quad \textbf{Cut} all circular sequences at $g_j$ and apply Algorithm~\ref{alg1} to find $\mathcal{L}_j$
			
			\quad \textbf{Denote} the length of $\mathcal{L}_j$ by $C_j$
			
			\quad \textbf{If} $C_j>C$
			
			\quad\quad   \textbf{Update} $C$ to be $C_j$, and \textbf{update} $\mathcal{S}$ to be the complement of $\mathcal{L}_j$
			
			\quad\textbf{End} of if
			
			\textbf{End} of while
			
			\textbf{Denote} the final $C$ by $C_0$, and \textbf{denote} the final $\mathcal{S}$ by $\mathcal{S}_0$
			
			\item \textbf{Output} $C_0$ and $\mathcal{S}_0$
		}
	\end{enumerate}
\end{algorithm}

\begin{algorithm}[!htbp]
	\caption{Detailed workflow of determining proper-transposable genes and quasi-transposable genes in Scenario 2, output stage.}
	\label{alg4}
	\vspace{-\bigskipamount}
	\ \\
	\begin{enumerate}
		{	\item \textbf{Input}
			
			\quad $m$ circular sequences of genes $1,\ldots,n$, where each gene has only one copy in each sequence; $C_0$ and $\mathcal{S}_0$ calculated from Algorithm~\ref{alg3}

			\item \textbf{For} each gene $g_l\in \mathcal{S}_0$ 	
			
			\quad \textbf{Cut} all circular sequences at $g_l$ and expand them to be linear sequences
			
			\quad\textbf{Apply} Algorithm~\ref{alg1} to find $\mathcal{L}_l$, an LCS in the expanded linear sequences
			
			\quad \textbf{Denote} the length of $\mathcal{L}_l$ by $C_l$	
			
			\quad \textbf{If} $C_l<C_0$
			
			\quad\quad \textbf{Output} $g_l$ is a proper-transposable gene
			
			\quad \textbf{Else}
			
			\quad\quad \textbf{Output} $g_l$ is a quasi-transposable gene
			
			\quad\quad \textbf{Cut} all circular sequences at $g_l$ and \textbf{apply} Algorithms~\ref{alg1},\ref{alg2} to find all proper/quasi-transposable genes for linear gene sequences starting at $g_l$
			
			\quad\quad \textbf{Output} genes not in $\mathcal{S}_0$ but being proper/quasi-transposable for such linear sequences are quasi-transposable for circular sequences
			
			\quad \textbf{End} of if
			
			\textbf{End} of for		
			
			\textbf{Output} other genes that have not been determined to be proper/quasi-transposable are all non-transposable
			
			\item \textbf{Output}: whether each gene is proper/quasi/non-transposable
		}
	\end{enumerate}
\end{algorithm}

\section{Linear sequences with duplicated genes}
\label{s3}
In Scenario 3, consider $m$ linear gene sequences, where each sequence contains different numbers of copies of $n$ genes $1,\ldots,n$. We need to find the LCS. Here we only consider common subsequences that consist of all or none copies of the same gene, and the subsequence length is calculated by genes, not gene copies. {If a gene has different copy numbers in different linear sequences, especially if a gene does not appear in certain sequences, then this gene cannot be in the LCS, and we directly discard this gene from all linear sequences. In the following, assume each gene has the same copy number in all linear sequences.}

\subsection{A graph representation of the problem}
Similar to Scenario 1, we construct an auxiliary graph $\mathcal{G}$, where each vertex is a gene (not a copy of a gene). However, in this case, the auxiliary graph is undirected: There is an undirected edge between gene $g_i$ and gene $g_j$ if and only if all the copies of $g_i$ and $g_j$ keep their relative locations in all sequences. For example, consider two sequences $(1,2,3,2,3,4,5)$ and $(2,1,3,3,2,4,5)$. For gene pair $1,3$, the corresponding sequences are $(1,3,3)$ and $(1,3,3)$, meaning that there is an edge between $1$ and $3$. For gene pair $1,2$, the corresponding sequences are $(1,2,2)$ and $(2,1,2)$, meaning that there is no edge between $1$ and $2$. See Fig.~\ref{sce3} for the auxiliary graph in this case. 

\begin{figure}[htb]
	\begin{center}
		$\xymatrix{
			& 1\ar@{-}[dr]\ar@{-}[ddr]\ar@{-}[ddl] &\\
			2\ar@{-}[d]\ar@{-}[drr] &     & 3\ar@{-}[dll]\ar@{-}[d]   \\
			4\ar@{-}[rr]     &     & 5       }$
	\end{center}
	\caption{The auxiliary graph $\mathcal{G}$ of two sequences $(1,2,3,2,3,4,5)$ and $(2,1,3,3,2,4,5)$. The unique maximum clique is $\{1,3,4,5\}$, meaning that the unique LCS is $(1,3,3,4,5)$. Thus $1,3,4,5$ are non-transposable genes, and $2$ is a proper-transposable gene.}
	\label{sce3}
\end{figure}
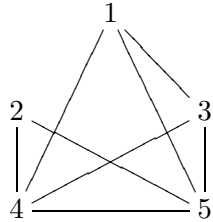

\begin{definition}
	A subgraph of $\mathcal{G}$ consists of some genes $g_1,\ldots,g_l$ and the edges between them. In a subgraph, if there is an edge between any two genes, this subgraph is called a clique. The maximum clique is the clique with the most vertices.
	\label{def5}
\end{definition}

\begin{definition}
	In graph $\mathcal{G}$, the degree of a gene $g$ is the number of edges linking $g$. In a clique of $p$ genes, where any two genes have an edge in between, each gene has degree $p-1$. 
\end{definition}

\begin{definition}
	If all copies of genes $g_1,\ldots,g_l$ keep their relative locations in all linear sequences, we say that $g_1,\ldots,g_l$ form a common subsequence. {Specifically, $g_i,g_j$ form a common subsequence if and only if $\mathcal{G}$ has an edge between $g_i,g_j$.}
	\label{def7}
\end{definition}

{The following Lemma~\ref{lenew} shows that for determining whether $g_1,\ldots,g_l$ form a common subsequence, we just need to verify each pair $g_i,g_j$ form a common subsequence.}

\begin{lemma}
	{ Genes $g_1,\ldots,g_l$ form a common subsequence if and only if each pair $g_i,g_j$ in $g_1,\ldots,g_l$ form a common subsequence.}
	\label{lenew}
\end{lemma}
\begin{proof}
	{If copies of $g_1,\ldots,g_l$ keep their relative locations in all linear sequences, then obviously copies of $g_i,g_j$ also keep their relative locations.
		
		For the other direction, only consider copies of $g_1,\ldots,g_l$ in these linear sequences. If $g_1,\ldots,g_l$ do not form a common subsequence, find the first digit that such linear sequences differ. Assume $g_p$ and $g_q$ can both appear in this digit. Then the copy sequences of $g_p,g_q$ are the same before this digit, and differ at this digit. This means $g_p,g_q$ cannot form a common subsequence. We illustrate this with Fig.~\ref{sce3}: For genes $2,3,4$, the copy sequences are $(2,3,2,3,4)$ and $(2,3,3,2,4)$. The third digit is different, where $2$ and $3$ can both appear. Then the copy sequences for genes $2,3$, $(2,3,2,3)$ and $(2,3,3,2)$, cannot match, and there is no edge between $2$ and $3$.}
\end{proof}

The following Lemma~\ref{l2} shows that there is a bijection between common subsequences of the linear sequences and cliques in $\mathcal{G}$. \emph{The problem of determining the LCS can be solved by determining the maximum clique of $\mathcal{G}$.} 

\begin{lemma}
	{Given linear sequences in Scenario 3, some genes $g_1,\ldots,g_l$ form a clique in the auxiliary graph $\mathcal{G}$ if and only if $g_1,\ldots,g_l$ form a common subsequence for the linear sequences. Therefore, the LCS for the linear sequences corresponds to the maximum clique for $\mathcal{G}$.}
	\label{l2}
\end{lemma}
\begin{proof}
	{If $l=1$, this single gene $g_1$ is a clique in $\mathcal{G}$, and it forms a common subsequence.
		
		If $l\ge 2$, the following four statements are equivalent:
		
		\noindent (1) Genes $g_1,\ldots,g_l$ form a clique in the auxiliary graph $\mathcal{G}$.
		
		\noindent (2) Each pair $g_i,g_j$ in $g_1,\ldots,g_l$ has an edge in $\mathcal{G}$.
		
		\noindent (3) Each pair $g_i,g_j$ in $g_1,\ldots,g_l$ form a common subsequence.
		
		\noindent (4) Genes $g_1,\ldots,g_l$ form a common subsequence.
		
		(1)$\Leftrightarrow$ (2) is from Definition~\ref{def5}; (2)$\Leftrightarrow$ (3) is from Definition~\ref{def7}; (3)$\Leftrightarrow$ (4) is from Lemma~\ref{lenew}.}
\end{proof}

\subsection{Equivalence to the maximum clique problem}
The above discussion shows that given gene sequences, we can construct an undirected graph $\mathcal{G}$, so that there is a bijection between common subsequences and cliques. The inverse also holds: We can construct corresponding gene sequences for a graph, {and there is a bijection between common subsequences and cliques.}

{For an undirected graph $\mathcal{G}$ with genes $1,2,\ldots,n$ as vertices, construct two auxiliary sequences $\mathfrak{S}_1$ and $\mathfrak{S}_2$ as the following: Initially, $\mathfrak{S}_1$ and $\mathfrak{S}_2$ are both $(1,2,\ldots,n)$. By a given order, check each pair $i,j$ in $1,2,\ldots,n$ and perform the following operation: if $\mathcal{G}$ has an edge between $i,j$, add $i,j$ to the end of $\mathfrak{S}_1$ and $\mathfrak{S}_2$; if $\mathcal{G}$ does not have an edge between $i,j$, add $i,j$ to the end of $\mathfrak{S}_1$, and add $j,i$ to the end of  $\mathfrak{S}_2$.}

\begin{lemma}
	{Given an undirected graph $\mathcal{G}$ and the corresponding auxiliary sequences $\mathfrak{S}_1,\mathfrak{S}_2$, some genes $g_1,\ldots,g_l$ form a clique in $\mathcal{G}$ if and only if $g_1,\ldots,g_l$ form a common subsequence for $\mathfrak{S}_1$ and $\mathfrak{S}_2$. Therefore, the LCS for the linear sequences corresponds to the maximum clique for $\mathcal{G}$.}
	\label{le}
\end{lemma}
\begin{proof}
	{ If $l=1$, $g_1$ itself is a clique in $\mathcal{G}$, and it forms a common subsequence for $\mathfrak{S}_1$ and $\mathfrak{S}_2$. 
		
		If $l\ge 2$, similar to the proof of Lemma~\ref{l2}, by Definition~\ref{def5} and Lemma~\ref{lenew}, we just need to prove that for any pair $g_i,g_j$ in $g_1,\ldots,g_l$, there is an edge between $g_i,g_j$ in $\mathcal{G}$ if and only if $g_i$ and $g_j$ form a common subsequence in $\mathfrak{S}_1$ and $\mathfrak{S}_2$.
		
		For $g_i<g_j$, consider the copy sequences $\mathfrak{C}_1$ and $\mathfrak{C}_2$ of $g_i,g_j$ in $\mathfrak{S}_1$ and $\mathfrak{S}_2$. Initially, $\mathfrak{C}_1$ and $\mathfrak{C}_2$ are both $(g_i,g_j)$. Next, when we check each gene pair to construct $\mathfrak{S}_1$ and $\mathfrak{S}_2$, there are four possibilities: (I) If the checked pair $g_p,g_q$ has neither $g_i$ nor $g_j$, then $\mathfrak{C}_1$ and $\mathfrak{C}_2$ are not affected. (II) If the checked pair is $g_i,g_k$, whether there is an edge $g_i,g_k$ in $\mathcal{G}$, $\mathfrak{C}_1$ and $\mathfrak{C}_2$ will both be added a $g_i$. (III) If the checked pair is $g_j,g_k$, whether there is an edge $g_j,g_k$ in $\mathcal{G}$, $\mathfrak{C}_1$ and $\mathfrak{C}_2$ will both be added a $g_j$. (IV) If the checked pair is $g_i,g_j$, we always add $g_i,g_j$ to $\mathfrak{C}_1$. Depending on whether edge $g_i,g_j$ exists in $\mathcal{G}$, we add $g_i,g_j$ or $g_j,g_i$ to $\mathfrak{C}_2$. In sum, if edge $g_i,g_j$ exists in $\mathcal{G}$, $\mathfrak{C}_1$ and $\mathfrak{C}_2$ are equal; if edge $g_i,g_j$ does not exist in $\mathcal{G}$, $\mathfrak{C}_1$ and $\mathfrak{C}_2$ differ by two digits.}
	
	{For example, corresponding to Fig.~\ref{sce3}, we start with $(1,2,3,4,5)$ and $(1,2,3,4,5)$. Since there is no edge between $1,2$ or between $2,3$, we have
		\[\mathfrak{S}_1=(1,2,3,4,5,\underline{1},\underline{2},1,3,1,4,1,5,\underline{2},\underline{3},2,4,2,5,3,4,3,5,4,5),\]
		\[\mathfrak{S}_2=(1,2,3,4,5,\underline{2},\underline{1},1,3,1,4,1,5,\underline{3},\underline{2},2,4,2,5,3,4,3,5,4,5).\]
		There is no edge between $2,3$ in Fig.~\ref{sce3}, and the corresponding copy sequences are $\mathfrak{C}_1=(2,3,2,3,\underline{2},\underline{3},2,2,3,3)$ and $\mathfrak{C}_2(2,3,2,3,\underline{3},\underline{2},2,2,3,3)$, which are different. There is an edge between $1,4$ in Fig.~\ref{sce3}, and the corresponding copy sequences are both $\mathfrak{C}_1=\mathfrak{C}_2=(1,4,1,1,1,4,1,4,4,4)$.}
\end{proof}

Combining Lemma~\ref{l2} and Lemma~\ref{le}, we obtain the following result:
\begin{proposition}
	{In computational complexity theory, finding the LCS in Scenario 3 is equivalent to finding the maximum clique in an undirected graph, meaning that if we have an oracle machine that solves one problem immediately, we can solve the other problem in an extra polynomial time.}
	\label{p1}
\end{proposition}
\begin{proof}
	{For an undirected graph $\mathcal{G}$ with $n$ vertices, we can construct the two auxiliary sequences $\mathfrak{S}_1$ and $\mathfrak{S}_2$ in $O(n^2)$ time. Given the LCS for $\mathfrak{S}_1$ and $\mathfrak{S}_2$ from the oracle machine, due to Lemma~\ref{le}, the genes in this LCS form the maximum clique for $\mathcal{G}$. 
		
		For $m$ linear sequences in Scenario 3 with $O(n)$ gene copies, we can construct corresponding auxiliary graph $\mathcal{G}$ in $O(mn^2)$ time. Given the maximum clique for $\mathcal{G}$ from the oracle machine, due to Lemma~\ref{l2}, the genes in this maximum clique are all the genes in the LCS for the linear sequences, and we need $O(n)$ time to construct the LCS.
		
		Therefore, finding the LCS in Scenario 3 and finding the maximum clique for an undirected graph are equivalent. }
\end{proof}

{The problem of determining the maximum clique in an undirected graph (maximum clique problem) is NP-hard \cite{valiente2002algorithms}. Therefore, from Proposition~\ref{p1}, we obtain the following result:}

\begin{corollary}
	{Finding the LCS in Scenario 3 is NP-hard.}
	\label{coro}
\end{corollary}
\begin{proof}
	{Since the maximum clique problem is NP-hard, every NP problem can be reduced to the maximum clique problem in polynomial time. From the first half of Proposition~\ref{p1}, the maximum clique problem can be reduced to the LCS problem in Scenario 3 in polynomial time. Therefore, every NP problem can be reduced to the LCS problem in Scenario 3 in polynomial time, and the LCS problem in Scenario 3 is NP-hard.}
\end{proof}
This means it is not likely to design an algorithm that always correctly determines the LCS in polynomial time. 

\subsection{A heuristic algorithm}

We have transformed Scenario 3 into the maximum clique problem for a graph $\mathcal{G}$. There have been various algorithms for the maximum clique problem \cite{jiang2016combining,li2017minimization,wang2016two}, and readers may refer to a review for more details \cite{wu2015review}. For completeness, we propose a simple idea: In the auxiliary graph $\mathcal{G}$, repeatedly abandon the gene with the smallest degree (and also edges linking this gene) until the remaining genes form a clique. See Algorithm~\ref{alg5} for the details of this greedy heuristic method. This algorithm is easy to understand, and can provide some intuition. We do not claim that Algorithm~\ref{alg5} is comparable to other sophisticated algorithms.

\begin{algorithm}[!htbp]
	\caption{A heuristic method for detecting transposable genes in Scenario 3.}
	\label{alg5}
	\vspace{-\bigskipamount}
	\ \\
	\begin{enumerate}
		{	\item \textbf{Input}
			
			\quad $m$ linear sequences of genes $1,\ldots,n$, where each gene can have multiple copies
			
			\item \textbf{Construct} the auxiliary graph $\mathcal{G}$:
			
			\quad Vertices of $\mathcal{G}$ are all the genes $1,\ldots,n$ (not their copies)
			
			\quad \textbf{For} each pair of genes $g_i,g_j$
			
			\quad\quad   \textbf{If} all copies of $g_i$ and $g_j$ keep their relative locations in all $m$ sequences
			
			\quad\quad\quad \textbf{Add} an undirected edge between $g_i$ and $g_j$ in $\mathcal{G}$
			
			\quad\quad\textbf{End} of if
			
			\quad\textbf{End} of for
			
			\textbf{Calculate} the degree for each gene in $\mathcal{G}$
			
			\item \textbf{While} true
			
			\quad\textbf{Find} a gene $g_i$ with the smallest degree $d_i$ in $\mathcal{G}$ 
			
			\quad \% If the minimal $g_i$ is not unique, choose one randomly
			
			\quad \textbf{If} $d_i+1$ is smaller than the number of genes in $\mathcal{G}$
			
			\quad \quad \textbf{Delete} $g_i$ and edges linking $g_i$ in $\mathcal{G}$ 
			
			\quad \quad \textbf{Update} the degrees of other genes
			
			\quad \textbf{Else}
			
			\quad \% The remaining genes form a clique
			
			\quad \quad \textbf{Break} the while loop
			
			\quad\textbf{End} of if
			
			\textbf{End} of while
			
			\% The final $\mathcal{G}$ is a clique of the original $\mathcal{G}$, and it is likely to be the largest one
			
			\item \textbf{Output} genes in the final $\mathcal{G}$ are not transposable, and genes not in the final $\mathcal{G}$ are transposable
		}
	\end{enumerate}
\end{algorithm}

Algorithm~\ref{alg5} does not always produce the correct result. See Fig.~\ref{ce2} for a counterexample. Here genes $1,2,3,4,5,6$ have degree $4$, while genes $7,8,9,10$ have degree $3$. When applying Algorithm~\ref{alg5}, genes $7,8,9,10$ are first abandoned, and the final result just has three genes, such as $1,3,5$. However, the maximum clique is $7,8,9,10$. Besides, Algorithm~\ref{alg5} can only determine one (possibly longest) common subsequence. Thus we cannot determine the existence of quasi-transposable genes.

\begin{figure}[htb]
	\begin{center}
		$\xymatrix{
			3\ar@{-}[d]\ar@{-}[dr]\ar@{-}[drr]\ar@{-}[r]&1\ar@{-}[dl]\ar@{-}[r]\ar@{-}[dr]&4\ar@{-}[dll]\ar@{-}[dl]\ar@{-}[d]&7\ar@{-}[d]\ar@{-}[r]\ar@{-}[dr]&8\ar@{-}[d]\ar@{-}[dl]\\
			5\ar@{-}[r]&2\ar@{-}[r]&6&10\ar@{-}[r]&9
		}$
	\end{center}
	\caption{The auxiliary graph $\mathcal{G}$ of linear sequences $(7,8,9,10,1,1,2,3,3,4,5,5,6)$ and $(1,2,1,3,4,3,5,6,5,7,8,9,10)$. This counterexample fails Algorithm~\ref{alg5}.}
	\label{ce2}
\end{figure}
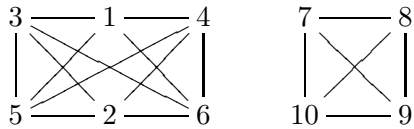

{We test Algorithm~\ref{alg5} on random graphs. Construct a random graph with $n$ genes, and any two genes have probability $p$ to have an edge between them. Use brute-force search to find the maximum clique, and compare its size with the result of Algorithm~\ref{alg5}. For each $5\le n\le 12$ and $p=0.01,0.1,0.3,0.5,0.7,0.9,0.99$, we repeat this for $10000$ times. Since finding the true maximum clique requires exponentially slow brute-force search, we do not test on very large graphs. See Table~\ref{random3} for the success rate of Algorithm~\ref{alg5}. We can see that when $p\sim 0.5$, the success rate decreases rapidly with the graph size $n$. In practice, different gene sequences do not differ too much, and most gene pairs have an edge that links them. This corresponds to $p\sim 1$ case, and Algorithm~\ref{alg5} can generally produce the correct result. Still, the performance of Algorithm~\ref{alg5} is not as good as other more sophisticated yet more complicated methods \cite{jiang2016combining,li2017minimization,wang2016two}.}

\begin{table}[]
	\begin{tabular}{llllllll}
		\multirow{2}{*}{$n$} &    &    &    & \multicolumn{3}{l}{Success rate} &  \\
		& $p=0.01$ & $p=0.1$ & $p=0.3$ & $p=0.5$ & $p=0.7$ & $p=0.9$ & $p=0.99$ \\
		5                  & 1.0000   & 1.0000   & 0.9954   & 0.9807   & 0.9821   &  0.9991  & 1.0000   \\
		6                  & 1.0000   & 0.9998   & 0.9815   & 0.9535   & 0.9667   &  0.9959  & 1.0000   \\
		7                  & 1.0000   & 0.9997   & 0.9573   & 0.9349   & 0.9464   &  0.9906  & 1.0000   \\
		8                  & 1.0000   & 0.9992   & 0.9269   & 0.9169   & 0.9269   &  0.9849  & 1.0000  \\
		9                  & 1.0000   & 0.9978   & 0.8988   & 0.8879   & 0.9062   &  0.9773  & 1.0000   \\
		10                 &  1.0000  & 0.9946   & 0.8763   & 0.8654   & 0.8692   &  0.9621  &  1.0000  \\
		11                 &  1.0000  & 0.9931   & 0.8673   & 0.8363   &  0.8468  &  0.9502  &   1.0000 \\
		12                 &  1.0000  & 0.9871   & 0.8616   & 0.8129   & 0.8215   &  0.9399  &  1.0000 
	\end{tabular}
	\caption{{The test results of Algorithm~\ref{alg5} on random graphs. We generate a graph with $n$ vertices, and for each vertex pair, there is an edge linking them with probability $p$. For each random graph, we check whether Algorithm~\ref{alg5} produces the true maximum clique. Each situation is repeated 10000 times, and we present the success rate.}}
	\label{random3}
\end{table}

{We apply Algorithm~\ref{alg5} to real gene sequencing data and manually confirm that the results are correct. See Appendix~\ref{app3}.}


Assume we have $m$ sequences with $n$ genes. In general, the copy number of a gene is small, and we can assume the length of each sequence is $\mathcal{O}(n)$. The time complexities of Step 2 and Step 3 in Algorithm~\ref{alg5} are $\mathcal{O}(mn^2)$ and $\mathcal{O}(n^2)$, and the overall time complexity is $\mathcal{O}(mn^2)$. The space complexity is trivially $\mathcal{O}(mn+n^2)$. 

{For the original LCS problem, since we only consider subsequences that contain all or none copies of the same gene, we do not know other algorithms that directly solve the same problem.
	
	We have transformed the LCS problem into the maximum clique problem for a graph. We recommend applying other algorithms for this maximum clique problem \cite{jiang2016combining,li2017minimization,wang2016two}, not Algorithm~\ref{alg5}, due to its low success rate for larger graphs. Our main contribution for this scenario is Proposition~\ref{p1} and Corollary~\ref{coro}.}

		%
		%
		%
		%
		%
		%
		%
		%
		%
		%
		%
		%
		%

\section{Circular sequences with duplicated genes}
\label{s4}
In Scenario 4, consider $m$ circular gene sequences, where each sequence contains different numbers of copies of $n$ genes $1,\ldots,n$. We need to find the LCS. Here we only consider common subsequences that consist of all or none copies of the same gene, and the subsequence length is calculated by genes, not gene copies.

We shall prove that finding the LCS in Scenario 4 is no easier than in Scenario 3. Thus Scenario 4 is also NP-hard. {If we consider all subsequences, not just subsequences that consist of all or none copies of the same gene, then it is known that the LCS problem is NP-hard \cite{nicolas2007longest}.}

\begin{proposition}
	Finding the LCS in Scenario 4 is NP-hard.
	\label{p2}
\end{proposition}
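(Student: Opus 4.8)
The plan is to prove NP-hardness by a polynomial-time reduction from Scenario 3, which is already known to be NP-hard by Proposition~\ref{p1}. The intuition is that a circular sequence is strictly more general than a linear one, so the only genuine obstacle is the rotational freedom of circular sequences: a common circular subsequence could in principle exploit a cut point different from the original linear start and thereby be longer than the linear longest common subsequence. I would neutralize this freedom with a block of fresh ``anchor'' genes that forces a common cut point.

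Concretely, given a Scenario 3 instance consisting of linear sequences $S_1,\ldots,S_m$ over genes $\{1,\ldots,n\}$ whose longest common subsequence has length $\ell$, I would introduce $n+1$ new distinct genes $b_1,\ldots,b_{n+1}$, each inserted exactly once into every sequence, and form the circular sequence $C_i$ obtained by closing $(b_1,\ldots,b_{n+1},s_{i,1},\ldots,s_{i,|S_i|})$ into a cycle. This construction is clearly polynomial. The heart of the argument is the claim that the longest common circular subsequence of $C_1,\ldots,C_m$ has length exactly $(n+1)+\ell$, so that any Scenario 4 solver immediately yields $\ell$ by subtracting $n+1$, giving Scenario 3 $\le_p$ Scenario 4.

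To establish this claim I would prove matching bounds. The lower bound is easy: taking all $n+1$ anchors together with a linear longest common subsequence of the $S_i$ gives a common circular subsequence of length $(n+1)+\ell$. For the upper bound I would split on whether a candidate common circular subsequence $T$ contains an anchor. If $T$ contains at least one anchor $b_j$, then since $b_j$ occurs exactly once in each $C_i$, its single copy pins the rotational alignment: reading each $C_i$ clockwise from its unique $b_j$ linearizes the matching, the $S$-block appears contiguously and in its original order regardless of $j$, and hence $|T|\le (n+1)+\ell$. If $T$ contains no anchor, then $T\subseteq\{1,\ldots,n\}$ and so $|T|\le n<(n+1)+\ell$. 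Either way $|T|\le (n+1)+\ell$, completing the claim.

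I expect the main obstacle to be exactly this case analysis for the upper bound, namely making precise that a single unique anchor forces a common cut point across all sequences and thereby reduces the circular matching to a linear one. The delicate point is to argue rigorously that every valid circular alignment must send $T$'s unique copy of $b_j$ to the unique copy of $b_j$ in each $C_i$, after which all remaining matched genes respect the induced linear order and the $S$-genes contribute at most $\ell$. The choice of $n+1$ anchors (indeed any number $L\ge n$ works) is precisely what guarantees that the anchor-free case can never exceed the anchored optimum. Once this structural lemma is in place, the equivalence with Scenario 3 and hence NP-hardness follow immediately.
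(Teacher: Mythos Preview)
Your proposal is correct and follows essentially the same approach as the paper: both reduce from Scenario~3 by appending a block of $n+1$ fresh single-copy anchor genes to each linear sequence before circularizing, then use the pigeonhole observation that any anchor-free common subsequence has at most $n$ genes to force the optimum to contain an anchor, which pins the rotation and recovers the Scenario~3 optimum. Your upper/lower bound presentation is somewhat more explicit than the paper's, but the construction and the key idea are identical.
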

\begin{proof}
	From Corollary~\ref{coro}, Scenario 3 is NP-hard, meaning that any NP problem can be reduced to Scenario 3 in polynomial time. We just need to prove that Scenario 3 can be reduced to Scenario 4 in polynomial time: { Given a problem in Scenario 3, we can construct a problem in Scenario 4 in polynomial time. With the solution to the Scenario 4 problem, we can find the solution to the Scenario 3 problem in an extra polynomial time.}
	
	Given $m$ linear sequences with $n$ genes in Scenario 3, add genes $n+1,\ldots,2n+1$ to the end of each sequence, and glue each linear sequence into a circular sequence. The LCS for these circular sequences has the following properties {(proved later)}: (1) the circular LCS contains all genes $n+1,\ldots,2n+1$; (2) after cutting at $n+1$ and removing genes $n+1,\ldots,2n+1$ from the circular LCS, the remaining linear sequence is the LCS in Scenario 3. Therefore, if we can find the LCS for these circular sequences, then we can find the LCS for linear sequences in polynomial time.
	
	{To prove (1), notice that $(n+1,\ldots,2n+1$) is a common subsequence for the circular sequences. Therefore, the LCS has at least $n+1$ genes. This means at least one gene in $n+1,\ldots,2n+1$ is included in the LCS. Without loss of generality, assume $n+1$ is in the LCS.} Since gene $n+1$ has only one copy and it is aligned in all sequences, $n+2,\ldots,2n+1$ are also aligned, meaning that they are also in the LCS. 
	
	{To prove (2), notice that after cutting at $n+1$} and removing $n+1,\ldots,2n+1$ from the circular LCS, the remaining linear sequence is a common subsequence in Scenario 3. If there is a longer common subsequence, {then this longer common subsequence for Scenario 3 plus} $n+1,\ldots,2n+1$ should be a longer common subsequence in Scenario 4, a contradiction.

\end{proof}

Similar to Scenario 3, to find the LCS in Scenario 4, we want to reduce it to a maximum clique problem. However, Lemma~\ref{l2} does not hold in Scenario 4. For example, we can consider a circular sequence $(1,2,3)$ and its mirror symmetry $(1,3,2)$. These two sequences are different, but any two genes form a common subsequence. {Next, we modify the gene pairs in Lemma~\ref{l2} to gene triples.}

{\begin{conjecture}
		In Scenario 4, if any three genes $g_i,g_j,g_l$ in $g_1,\ldots,g_k$ form a common subsequence, then $g_1,\ldots,g_k$ form a common subsequence. 
		\label{c2}
	\end{conjecture}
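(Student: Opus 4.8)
The plan is to first reduce Conjecture~\ref{c2} to the case of two circular sequences. Since rotation-equivalence of cyclic words is transitive, and the hypothesis forces every triple of genes to be simultaneously rotation-equal across all $m$ sequences, it suffices to prove the following: for two circular words $u,v$ over the same multiset of genes, if the restrictions of $u$ and $v$ to every $3$-gene sub-alphabet (deleting all copies of the other genes) agree up to rotation, then $u$ and $v$ agree up to rotation. Note that every gene automatically has equal multiplicity in $u$ and $v$ once $k\ge 3$, because rotation-equal restricted words have equal letter counts, so each gene lies in some triple that pins its count. This reframes the conjecture as a reconstruction statement for necklaces from their $3$-color sub-necklaces, the natural ternary analogue of the binary fact underlying Lemma~\ref{l2} in the linear case.

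Next I would attempt an induction on the number of genes $k$, reducing the statement to a \emph{gluing lemma}: if $u,v$ agree up to rotation on every $(k-1)$-gene sub-alphabet, then they agree. Granting this lemma, the inductive hypothesis (that triples suffice for $k-1$ genes) makes every $(k-1)$-subalphabet of $u$ and $v$ rotation-equal, and the gluing lemma completes the inductive step; the base case $k=3$ is vacuous. The concrete tool for the gluing lemma would be a skeleton-and-insertion argument: fix two genes $a,b$ and record their interleaving pattern along the circle (their own sub-necklace, known by hypothesis), which partitions the circle into arcs; then for any third gene $c$ the triple $\{a,b,c\}$ prescribes exactly how many copies of $c$ fall into each arc and in what order. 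Performing this for all $c$ should reconstruct $u$ arc by arc, and the identical prescription reconstructs $v$, yielding $u\cong v$. The role of the repeated genes is precisely that two genes of multiplicity $\ge 2$ produce a rigid, asymmetric skeleton that pins the remaining genes down.

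The hard part—and exactly where a counterexample would live if the conjecture is false—is the \emph{degenerate} case in which no pair of genes provides a rigid skeleton, for instance when the $a,b$ sub-necklace carries a nontrivial rotational or reflective symmetry. The reflection $(1,2,3)$ versus $(3,2,1)$ is the smallest warning sign (already excluded by a single triple), but larger symmetric necklaces may hide genuine ambiguity: there the triple data fixes only the placement of each third gene relative to a \emph{symmetric} skeleton, and one must show that the joint constraints contributed by all the remaining genes break the symmetry and still force a unique global word.

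Thus the central obstacle is establishing global coherence of the per-triple arc-distributions: one must rule out a ``phantom'' necklace that realizes all prescribed $3$-color restrictions yet differs from $u$. I would try to prove such coherence by a consistency or Helly-type argument on the arc assignments, showing that local triple agreements cannot be patched together in two essentially different ways; and if that argument stalls on the symmetric profiles above, I would switch to a systematic search over small multiplicity patterns for a minimal counterexample, since the difficulty of this coherence step is precisely why the statement is posed as a conjecture rather than proved alongside Lemma~\ref{l2}.
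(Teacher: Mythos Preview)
The paper does not prove this statement: Conjecture~\ref{c2} is explicitly introduced with the words ``we have the following conjecture, although we do not know if it is correct or not,'' and it is listed among the unresolved problems in the conclusion. There is therefore no proof in the paper to compare your proposal against.

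As for your outline itself, it is a reasonable plan of attack rather than a proof. Your reduction from $m$ sequences to two sequences is valid (pairwise rotation-equality plus transitivity gives global equality), and the induction-plus-gluing scheme is a sensible architecture. However, the argument you sketch for the gluing lemma conflates two levels: the lemma assumes agreement on all $(k-1)$-gene sub-alphabets, yet your skeleton-and-insertion step only invokes triples $\{a,b,c\}$. With $(k-1)$-data in hand the insertion step is much easier---you may delete a single gene, align the remaining $(k-1)$-word, and then argue that the deleted gene's position is forced by any other $(k-1)$-restriction containing it---so the induction can likely be made to work, but you should separate the two arguments cleanly.

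The genuine gap, which you correctly isolate, is the degenerate case where every two-gene skeleton has nontrivial rotational symmetry. Your proposal does not resolve it; you only note that a Helly-type coherence argument ``would'' be tried and that a counterexample search is the fallback. That is exactly the open part of the conjecture, so at present your proposal remains a strategy with the decisive case still missing---consistent with the paper's stance that the statement is unproved.
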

	Nevertheless, Conjecture~\ref{c2} is not always true. See Fig.~\ref{cec2} for a counterexample. 
	\begin{figure}[htb]
		\begin{center}
			$\xymatrix{
				1\ar@{-}[r]&2\ar@{-}[r]&3\ar@{-}[r]&4\ar@{-}[r]&1\ar@{-}[d]&1\ar@{-}[r]&2\ar@{-}[r]&4\ar@{-}[r]&3\ar@{-}[r]&1\ar@{-}[d]\\
				4\ar@{-}[u]&&&&4&3\ar@{-}[u]&&&&3\ar@{-}[d]\\
				3\ar@{-}[u]&&&&3\ar@{-}[d]\ar@{-}[u]&4\ar@{-}[u]&&&&4\ar@{-}[d]\\
				1\ar@{-}[u]&3\ar@{-}[l]&4\ar@{-}[l]&1\ar@{-}[l]&2\ar@{-}[l]&1\ar@{-}[u]&4\ar@{-}[l]&3\ar@{-}[l]&1\ar@{-}[l]&2\ar@{-}[l]
			}$
		\end{center}
		\caption{{Two circular sequences. These two sequences are different, but any three genes form a common subsequence. This counterexample fails Conjecture~\ref{c2}.}}
		\label{cec2}
	\end{figure}
	
	Although Conjecture~\ref{c2} has counterexamples, a weaker version does hold:
	
	\begin{lemma}
		In Scenario 4, if any three genes $g_i,g_j,g_l$ in $g_1,\ldots,g_k$ form a common subsequence, and one gene $g_s$ in $g_1,\ldots,g_k$ has only one copy in each sequence, then $g_1,\ldots,g_k$ form a common subsequence. 
		\label{lastlemma}
	\end{lemma}
	\begin{proof}
		The proof is similar to that of Lemma~\ref{lenew}. Only consider copies of $g_1,\ldots,g_l$ in these circular sequences. If $g_1,\ldots,g_l$ do not form a common subsequence, cut at $g_s$ to expand the circular sequences to linear sequence, and find the first digit that such linear sequences differ. Assume $g_p$ and $g_q$ can both appear in this digit. Then the copy sequences of $g_p,g_q$ are the same before this digit, and differ at this digit. This means $g_s,g_p,g_q$ cannot form a common subsequence for the linear sequences. Since $g_s$ has only one copy, and the linear subsequences of $g_s,g_p,g_q$ start with $g_s$, this means that $g_s,g_p,g_q$ cannot form a common subsequence for the circular sequences.
	\end{proof}
	
}

To solve Scenario 4 for $m$ sequences with $n$ genes, construct an auxiliary $3$-uniform hypergraph $\mathcal{G}$ as following \cite{Diestel}: vertices are genes $1,\ldots,n$; there is a $3$-hyperedge (undirected) that links genes $g_i,g_j,g_k$ if and only if they form a common subsequence. 

For a $3$-uniform hypergraph $\mathcal{G}$, a clique is a subset of vertices, such that each vertex triple in this subset is linked by a $3$-hyperedge. The maximum clique is the clique that has the most vertices.

{
	Since Conjecture~\ref{c2} does not hold, in Scenario 4, not every clique corresponds to a common subsequence. However, for each common subsequence with genes $g_1,\ldots,g_q$, since any gene triple $g_i,g_j,g_l$ in $g_1,\ldots,g_q$ form a subsequence, there is a $3$-hyperedge linking $g_i,g_j,g_l$. Therefore, the common subsequence $g_1,\ldots,g_q$ corresponds to a clique.

	\begin{proposition}
		For $m$ sequences with $n$ genes, if the maximum clique $g_1,\ldots,g_k$ of the auxiliary hypergraph $\mathcal{G}$ contains a gene $g_s$ that has only one copy in each sequence, then the maximum clique $g_1,\ldots,g_k$ corresponds to the LCS of the sequences. 
		\label{p3}
	\end{proposition}
	
	\begin{proof}
		From Lemma~\ref{lastlemma}, the maximum clique $g_1,\ldots,g_k$ corresponds to a common subsequence of length $k$. If the LCS has length $p>k$, then it corresponds to a clique of size $p$, larger than the maximum clique, a contradiction. Therefore, the LCS has length $k$, and the maximum clique $g_1,\ldots,g_k$ corresponds to the LCS.
	\end{proof}
	
	For many prokaryotes, more than half of the genes only have one copy in the gene sequence \cite{chern2011comparison}. Besides, if the maximum clique has at most $n/2$ genes, then those sequences are too different. In this case, determining transposable genes by finding the LCS does not make much sense in biology, since the actual evolution trajectory might not be the shortest one. Therefore, in practice, it is common for the maximum clique to contain a gene that has only one copy in each sequence, and Proposition~\ref{p3} generally holds.
}

Under a weak condition, we have reduced Scenario 4 into the maximum clique problem for $3$-uniform hypergraphs, which is also NP-hard \cite{wu2015review}. There have been some algorithms for the maximum clique problem for $3$-uniform hypergraphs \cite{torres2017hclique,rota2007continuous}. For completeness, we propose a simple idea: Repeatedly delete the gene that has the smallest degree, until we have a clique that any three genes have a $3$-hyperedge that links them. We summarize this greedy heuristic method as Algorithm~\ref{alg7}. This algorithm is easy to understand, and can provide some intuition. 

Algorithm~\ref{alg7} does not always produce the correct result. See Fig.~\ref{ce3} for a counterexample. Here each gene in $1,2,3,4,5,6$ has degree $4$, while each gene in $7,8,9,10$ has degree $3$ .When applying Algorithm~\ref{alg7}, genes $7,8,9,10$ are first deleted, and the final result just has three genes, such as $(1,3,5)$. However, the LCS $(7,8,9,10)$ has four genes.

\begin{figure}[htb]
	\begin{center}
		$\xymatrix{
			1\ar@{-}[r]&2\ar@{-}[r]&7\ar@{-}[r]&3\ar@{-}[r]&4\ar@{-}[d]&2\ar@{-}[r]&1\ar@{-}[r]&10\ar@{-}[r]&4\ar@{-}[r]&3\ar@{-}[d]\\
			10\ar@{-}[u]&6\ar@{-}[l]&5\ar@{-}[l]&9\ar@{-}[l]&8\ar@{-}[l]&9\ar@{-}[u]&5\ar@{-}[l]&6\ar@{-}[l]&8\ar@{-}[l]&7\ar@{-}[l]\\
			1\ar@{-}[r]&2\ar@{-}[r]&9\ar@{-}[r]&3\ar@{-}[r]&4\ar@{-}[d]&2\ar@{-}[r]&1\ar@{-}[r]&8\ar@{-}[r]&4\ar@{-}[r]&3\ar@{-}[d]\\
			8\ar@{-}[u]&6\ar@{-}[l]&5\ar@{-}[l]&7\ar@{-}[l]&10\ar@{-}[l]&7\ar@{-}[u]&5\ar@{-}[l]&6\ar@{-}[l]&10\ar@{-}[l]&9\ar@{-}[l]
		}$
	\end{center}
	\caption{Four circular sequences. The LCS is $(7,8,9,10)$. This counterexample fails Algorithm~\ref{alg7}.}
	\label{ce3}
\end{figure}
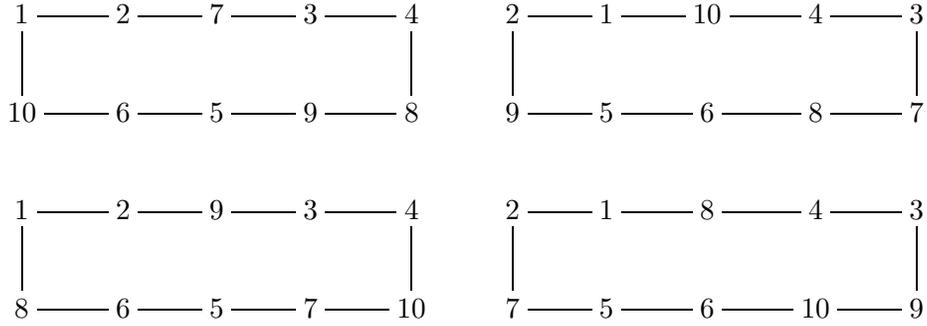

{We test Algorithm~\ref{alg7} on random graphs. Construct a random $3$-uniform hypergraph with $n$ genes, and any three genes have probability $p$ to have a $3$-hyperedge among them. Use brute-force search to find the maximum clique, and compare its size with the result of Algorithm~\ref{alg7}. For each $5\le n\le 12$ and $p=0.01,0.1,0.3,0.5,0.7,0.9,0.99$, we repeat this for $10000$ times. Since finding the true maximum clique requires exponentially slow brute-force search, we do not test on very large graphs. See Table~\ref{random4} for the success rate of Algorithm~\ref{alg7}. We can see that when $p\sim 0.5$, the success rate decreases rapidly with the hypergraph size $n$. In practice, different gene sequences do not differ too much, and most gene triples have a $3$-hyperedge that links them. This corresponds to $p\sim 1$ case, and Algorithm~\ref{alg7} can generally produce the correct result. Still, the performance of Algorithm~\ref{alg7} is not as good as other more sophisticated yet more complicated methods \cite{torres2017hclique,rota2007continuous}.}

\begin{table}[]
	\begin{tabular}{llllllll}
		\multirow{2}{*}{$n$} &    &    &    & \multicolumn{3}{l}{Success rate} &  \\
		& $p=0.01$ & $p=0.1$ & $p=0.3$ & $p=0.5$ & $p=0.7$ & $p=0.9$ & $p=0.99$ \\
		5                  & 1.0000   & 1.0000   & 1.0000   & 1.0000   & 1.0000   &  1.0000  & 1.0000   \\
		6                  & 1.0000   & 1.0000   & 0.9971   & 0.9676   & 0.9741   &  0.9998  & 1.0000   \\
		7                  & 1.0000   & 1.0000   & 0.9720   & 0.9167   & 0.9652   &  0.9854  & 1.0000   \\
		8                  & 1.0000   & 0.9996   & 0.9217   & 0.9089   & 0.9142   &  0.9559  & 0.9999  \\
		9                  & 1.0000   & 0.9993   & 0.8381   & 0.9210   & 0.8537   &  0.9255  & 0.9997   \\
		10                 &  1.0000  & 0.9969   & 0.7637   & 0.9190   & 0.8439   &  0.8980  &  0.9973  \\
		11                 &  1.0000  & 0.9947   & 0.7138   & 0.8737   &  0.8406  &  0.8402  &   0.9930 \\
		12                 &  1.0000  & 0.9893   & 0.7111   & 0.7890   & 0.7919   &  0.8177  &  0.9864 
	\end{tabular}
	\caption{{The test results of Algorithm~\ref{alg7} on random graphs. We generate a $3$-uniform hypergraph with $n$ vertices, and for each vertex triple, there is a $3$-hyperedge linking them with probability $p$. For each random graph, we check whether Algorithm~\ref{alg7} produces the true maximum clique. Each situation is repeated 10000 times, and we present the success rate.}}
	\label{random4}
\end{table}

{We apply Algorithm~\ref{alg7} to real gene sequencing data and manually confirm that the results are correct. See Appendix~\ref{app4}.}

Assume we have $m$ sequences with $n$ genes. In general, the copy number of a gene is small, and we can assume the length of each sequence is $\mathcal{O}(n)$. The time complexities of Step 2 and Step 3 in Algorithm~\ref{alg7} are $\mathcal{O}(mn^3)$ and $\mathcal{O}(n^3)$, and the overall time complexity is $\mathcal{O}(mn^3)$. The space complexity is trivially $\mathcal{O}(mn+n^3)$. 

{For the original LCS problem, since we only consider subsequences that contain all or none copies of the same gene, we do not know other algorithms that directly solve the same problem.
	
	Under a weak condition, we have transformed the LCS problem into the maximum clique problem for a $3$-uniform hypergraph. If the condition in Proposition~\ref{p3} does not hold, we do not know how Scenario 4 can be tackled. For the maximum clique problem, we recommend applying other algorithms for this maximum clique problem \cite{torres2017hclique,rota2007continuous}, not Algorithm~\ref{alg7}, due to its low success rate for larger graphs. Our main contribution for this scenario is Propositions~\ref{p2},\ref{p3}.}

\begin{algorithm}[!htbp]
	\caption{A heuristic method for detecting transposable genes in Scenario 4.}
	\label{alg7}
	\vspace{-\bigskipamount}
	\ \\
	\begin{enumerate}
		{	\item \textbf{Input}
			
			\quad $m$ circular sequences of genes $1,\ldots,n$, where each gene can have multiple copies
			
			\item \textbf{Construct} the auxiliary graph $\mathcal{G}$:
			
			\quad Vertices of $\mathcal{G}$ are all the genes $1,\ldots,n$ (not their copies)
			
			\quad \textbf{For} each gene triple $g_i,g_j,g_k$
			
			\quad\quad   \textbf{If} all copies of $g_i,g_j,g_k$ keep their relative locations in all $m$ sequences
			
			\quad\quad\quad \textbf{Add} a $3$-hyperedge that links $g_i,g_j,g_k$ in $\mathcal{G}$
			
			\quad\quad\textbf{End} of if
			
			\quad\textbf{End} of for			
			
			\item \textbf{While} there exist three genes that do not share a $3$-hyperedge
			
			\quad \textbf{Calculate} the degree for each gene in $\mathcal{G}$
			
			\quad\textbf{Delete} the gene with the smallest degree and $3$-hyperedges that links this gene
			
			\quad \% If there are multiple genes with the smallest degree, delete one randomly
			
			\textbf{End} of while
			
			\% After this while loop, any three genes form a common subsequence
			
			\% If the condition in Proposition~\ref{p3} holds, the remaining genes form a common subsequence
			
			\item \textbf{Output} remaining genes are not transposable, and other genes are transposable
		}
	\end{enumerate}
\end{algorithm}

\section{Conclusion and discussion}
\label{con}
In this paper, we study the LCS problem and design Algorithms~\ref{alg1}--\ref{alg7} for different scenarios. Specifically, we consider the case where the LCS is not unique, and determine whether each number appears in all/some/none of the LCSs. These algorithms are applied to gene sequences to determine the stability of genes. To apply those algorithms, one needs to apply genomic annotation tools to transform raw DNA sequencing data into gene sequences, and replace gene names by numbers. Those algorithms have at most $O(mn^3)$ time complexity, where $m$ is the number of sequences, and $n$ is the number of genes. Thus they can run in a reasonable time for most applications. We prove that the latter two scenarios are NP-hard (Corollary~\ref{coro} and Proposition~\ref{p2}).

We start with gene sequences and determine translocated genes. Therefore, short transposons (possibly shorter than a gene) cannot be determined. Besides, we do not determine specific genomic rearrangement events. We aim at determining which genes are able to translocate (i.e., less stable). Specifically, we study how many LCSs contain a certain gene, as a measure for its ``stability''. This mesoscopic viewpoint can be intriguing for understanding changes in genome.

We can adopt a stricter definition of transposable genes to exclude a gene which only changes its relative position in a few (no more than $l$, where $l$ is small enough) sequences. Then we should consider the longest sequence which is a common subsequence of at least $m-l$ sequences. We can run the corresponding algorithm for every $m-l$ sequences. Thus the total time complexity will be multiplied by a factor of $m^l$.

The results in this paper are not limited to Scenarios 1--4. They can be applied to other bioinformatics situations, or even other fields that need discrete mathematics tools, such as text processing, compiler optimization, data analysis, image analysis \cite{hajiaghayi2019massively}. Besides, algorithms in this paper might be able to detect non-syntenic regions \cite{lee2010non}.

There are some possible future directions: (1) extend Proposition~\ref{p3} to find more efficient solutions to Scenario 4; (2) determine whether genes appear in all LCSs in other similar scenarios; (3) develop fast algorithms when adopting a stricter definition of transposable genes.

\appendix
{
	\section{Applications on experimental data}
	\label{app}
	We apply Algorithms~\ref{alg1}--\ref{alg7} on \emph{Escherichia coli} gene sequences to determine transposons. From NCBI sequencing database, we obtain gene sequences of three individuals of \emph{E. coli} strain ST540 (GenBank CP007265.1, GenBank CP007390.1, GenBank CP007391.1) and three individuals of \emph{E. coli} strain ST2747 (GenBank CP007392.1, GenBank CP007393.1, GenBank CP007394.1). Notice that the gene sequence of \emph{E. coli} is circular. 
	
	\subsection{Scenario 1}
	All three sequences of ST540 start with gene dnaA and end with gene rpmH. We can regard them as linear gene sequences. We remove genes that appear more than once in one sequence, and remove genes that do not appear in all three sequences. After applying Algorithms~\ref{alg1},\ref{alg2} on these three sequences, there are 301 non-transposable genes, 4 quasi-transposable genes (hpaC, iraD, fbpC, psiB), and 263 proper-transposable genes. The reason for the large amount of proper-transposable genes is that sequence CP007265.1 is significantly different from the other two. After removing it and applying Algorithms~\ref{alg1},\ref{alg2} to the remaining two sequences (CP007390.1 and CP007391.1), there are 564 non-transposable genes and 4 quasi-transposable genes (hpaC, iraD, fbpC, psiB). Therefore, some genes in hpaC, iraD, fbpC, psiB are likely to translocate.
	
	All three sequences of ST2747 start with gene glnG and end with gene hemG. We can regard them as linear gene sequences. We remove genes that appear more than once in one sequence, and remove genes that do not appear in all three sequences. After applying Algorithms~\ref{alg1},\ref{alg2} on these three sequences, all 573 genes are non-transposable.
	
	\subsection{Scenario 2}
	
	We regard all three sequences of ST540 as circular gene sequences. We remove genes that appear more than once in one sequence, and remove genes that do not appear in all three sequences. After applying Algorithms~\ref{alg3},\ref{alg4} on these three sequences, there are 389 non-transposable genes, 50 quasi-transposable genes, and 129 proper-transposable genes. The reason for the large amount of proper-transposable genes is that sequence CP007265.1 is significantly different from the other two. After removing it and applying Algorithms~\ref{alg3},\ref{alg4} to the remaining two sequences (CP007390.1 and CP007391.1), there are 564 non-transposable genes and 4 quasi-transposable genes (hpaC, iraD, fbpC, psiB). Therefore, some genes in hpaC, iraD, fbpC, psiB are likely to translocate.
	
	We regard all three sequences of ST2747 as circular gene sequences. We remove genes that appear more than once in one sequence, and remove genes that do not appear in all three sequences. After applying Algorithms~\ref{alg3},\ref{alg4} on these three sequences, all 573 genes are non-transposable genes.
	
	\subsection{Scenario 3}
	\label{app3}
	All three sequences of ST540 start with gene dnaA and end with gene rpmH. We can regard them as linear gene sequences. We remove genes that do not appear in all three sequences. After applying Algorithm~\ref{alg5} on these three sequences, there are 308 non-transposable genes and 288 transposable genes. The reason for the large amount of transposable genes is that sequence CP007265.1 is significantly different from the other two. After removing it and applying Algorithm~\ref{alg5} to the remaining two sequences (CP007390.1 and CP007391.1), there are 582 non-transposable genes and 14 transposable genes (dkgB, metN, metQ, rcsF, prfC, rimI, rsmC, hpaC, iraD, fecC, fecD, purH, eutA, pssA). Notice that hpaC and iraD are already detected as quasi-transposable genes in Scenario 1. For the result of comparing CP007390.1 and CP007391.1, we manually confirm that this result is correct, since those 582 non-transposable genes form a common subsequence, 11 transposable genes have different copy numbers in different sequences, and the other 3 transposable genes (hpaC, iraD, pssA) cannot be used to form a longer common subsequence.
	
	All three sequences of ST2747 start with gene glnG and end with gene hemG. We can regard them as linear gene sequences. We remove genes that do not appear in all three sequences. After applying Algorithm~\ref{alg5} on these three sequences, there are 582 non-transposable genes and 15 transposable genes (mobA, btuB, lacY, sdhB, ihfB, rnfD, tppB, emrB, rluD, fmt, trkA, mscL, nepI, hemD, hemC). We manually confirm that this result is correct, since those 582 non-transposable genes form a common subsequence, and the other 15 transposable genes have different copy numbers in different sequences.
	
	\subsection{Scenario 4}
	\label{app4}
	We regard all three sequences of ST540 as circular gene sequences. We remove genes that do not appear in all three sequences. After applying Algorithm~\ref{alg7} on these three sequences, there are 400 non-transposable genes and 196 transposable genes. The reason for the large amount of proper-transposable genes is that sequence CP007265.1 is significantly different from the other two. After removing it and applying Algorithm~\ref{alg7} to the remaining two sequences (CP007390.1 and CP007391.1), there are 582 non-transposable genes and 14 transposable genes (dkgB, metN, metQ, rcsF, prfC, rimI, rsmC, hpaC, iraD, fecC, fecD, purH, eutA, pssA). Notice that hpaC and iraD are already detected as quasi-transposable genes in Scenario 2. For the result of comparing CP007390.1 and CP007391.1, we manually confirm that this result is correct, since those 582 non-transposable genes form a common subsequence, 11 transposable genes have different copy numbers in different sequences, and the other 3 transposable genes (hpaC, iraD, pssA) cannot be used to form a longer common subsequence.
	
	We regard all three sequences of ST2747 as circular gene sequences. We remove genes that do not appear in all three sequences. After applying Algorithm~\ref{alg7} on these three sequences, there are 582 non-transposable genes and 15 transposable genes (mobA, btuB, lacY, sdhB, ihfB, rnfD, tppB, emrB, rluD, fmt, trkA, mscL, nepI, hemD, hemC). We manually confirm that this result is correct, since those 582 non-transposable genes form a common subsequence, and the other 15 transposable genes have different copy numbers in different sequences.}

\section*{Acknowledgments}
The author would like to thank Zhongkai Zhao for helping with designing Algorithm~\ref{alg1}. The author would like to thank Ao Sun for constructing a counterexample for Conjecture~\ref{c2}. The author would like to thank Lucas B\"ottcher, Jun Su, Boyu Zhang, Ruixiang Zhang, and anonymous reviewers for providing helpful comments.

\bibliographystyle{acm}
\bibliography{Transposons}

\end{document}